\newcommand\preitem{\mdseries\textbullet\space}
\newlist{desclist}{description}{3}
\setlist[desclist,1]{format=\preitem\bfseries,leftmargin=\widthof{\preitem},style=sameline}
\newcommand{\Rmnum}[1]{\expandafter\@slowromancap\romannumeral #1@}
\makeatother \providecommand{\U}[1]{\protect \rule{.1in}{.1in}}
\newtheorem{theorem}{Theorem}[section]
\newtheorem{lemma}{Lemma}[section]
\makeatother \providecommand{\U}[1]{\protect \rule{.1in}{.1in}}
\DeclareMathOperator{\sech}{sech}
\def \W#1{\widehat{#1}}
\def \DW#1#2{|\widehat{#1};\widehat{#2}|}
\title{New dynamics of the classical and nonlocal Gross-Pitaevskii equation  with a parabolic potential}
\author{Shi-min Liu,~~Hua Wu,~~  Da-jun Zhang\footnote{Corresponding author. Email: djzhang@staff.shu.edu.cn}\\
{\small\it Department of Mathematics,
 Shanghai University, Shanghai 200444,  P.R. China}}
\date{\today}
\begin{document}

\maketitle

\begin{abstract}
Solutions of the classical and nonlocal  Gross-Pitaevskii (GP) equation with a parabolic potential and a gain term
are derived by using a second order nonisospectral Ablowitz-Kaup-Newell-Segur system and
reduction technique of double Wronskians.
Solutions of the classical GP equation show typical space-time localized characteristics.
An interesting dynamics, solitons carrying an oscillating wave, are found with mathematical analysis and illustrations.
Solutions of some nonlocal cases are also illustrated.

\vskip 6pt

\noindent
\textbf{Key Words:}\quad Gross-Pitaevskii equation, bilinear, double Wronskian,
nonlocal reduction,  wave oscillation
\end{abstract}

%

\section{Introduction}
The well-known  Gross-Pitaevskii (GP) equation is given by \cite{Gross-1961, Gross-1963, Pitaevskii-1961}
\begin{equation}\label{GP}
\mathrm{i} \hbar \frac{\partial \psi}{\partial t}=\left(-\frac{\hbar^{2}}{2 m} \nabla^{2}+V_{ext}+g|\psi|^{2}\right) \psi,
\end{equation}
where $\psi=\psi(\mathbf{x},t)$ is the wave function with $\mathbf{x}$ the three-dimensional spatial coordinate,
$i$ is the imaginary unit,
$\hbar$ is the Planck constant, $m$ is the mass of the boson, $V_{ext}$ is the external potential
and $g=4\pi\hbar^2/a_s$ is a parameter that measures the atomic interactions with $a_s$ the scattering length of two interacting bosons.
The GP equation can be used to describe the behaviour of the wave function in Bose-Einstein condensates (BECs),
it has the same mathematical form as the nonlinear Schr\"odinger equation (NLS) with  an external potential.
For more details about BECs and the GP equation, one can refer to Ref.\cite{Liu-2019}.
This equation was first derived by Gross \cite{Gross-1961} and Pitaevskii \cite{Pitaevskii-1961} and bears by their name.
Since then, there are many works from different sides on study of the GP equation with various forms of external potential, such as harmonic potential
\cite{Brazhnyi-2003,Serkin-2007,Liang-2005,Zhang-2009}, Gaussian trap \cite{Tempere-2001,Delande-2014} and
optical lattice potential \cite{Merhasin-2006,Sakaguchi-2009,Dong-2013}.

In this paper,  we investigate the GP equation in (1+1)-dimension with a parabolic potential and a gain term \cite{Brazhnyi-2003,Serkin-2007}
\begin{equation}\label{GP2}
iq_t+q_{xx}+2|q|^2q+(\delta x^2+i\alpha)q=0,
\end{equation}
where $\delta$ is a real constant, the gain $\alpha=\alpha(t)$ is a real function of $t$ governed by
\begin{equation}\label{a}
\alpha_t+2\alpha^2=2\delta.
\end{equation}
The GP equation of this form was first investigated by Gupta in 1979 \cite{Gupta-1979},
where he gave a transformation to convert this equation to an integrable nonisospectral NLS.
Guaranteed by such a relation, the equation \eqref{GP2} has been solved by using different methods of integrable systems,
including Inverse Scattering Transform \cite{Gupta-1979,Khawaja-U-A-2009,Zhang-Y-J-2017},
Darboux transformation \cite{Vinoj-2001,Zhang-Y-J-2014}, Wronskian technique \cite{Sun-2014},
and so on.

The purpose of this paper is to investigate both classical and nonlocal form of the GP equation \eqref{GP2}
through the bilinear method and provide solutions in double Wronskian form.
As a first nonlocal integrable system, the nonlocal NLS equation was introduced by
Ablowitz and Musslimani \cite{AM-PRL-2013} in 2013 as a PT-symmetric model.
From then on nonlocal integrable systems received intensive attention from many aspects
(e.g. \cite{AM-Nonl-2016,Ablowitz-SAPM-2016,Fokas-2016,YanY-arxiv-2017,Cau-SAPM-2018,ChenDLZ-SPAM-2018,Zhou-SAPM-,AM-JPA-2019}),
but so far there is no report on nonlocal nonisospectral integrable systems, which contains $x$-dependent coefficients.
In this paper, we will show that the nonisospectral NLS equation related to the GP equation \eqref{GP2} allows a nonlocal form.
This fact enables us to investigate the nonlocal form of equation \eqref{GP2}.
We will employ the reduction approach on double Wronskians we proposed recently \cite{ChenDLZ-SPAM-2018}
to get solutions of the nonlocal nonisospectral NLS equation as well as the nonlocal GP equation.

The paper is organized as follows.
In Sec.\ref{sec-2} we  derive the classical and nonlocal GP equations
from a coupled nonisospectral system in the Ablowitz-Kaup-Newell-Segur (AKNS) hierarchy.
We also give a  bilinear form of \eqref{GP2} and obtain its $N$-soliton
solutions ($N$SS) through the Hirota method.
In Sec.\ref{sec-3}  we implement reduction technique on double Wronskians
and obtain  solutions in double Wronskian form for both classical and nonlocal GP equations.
Then in Sec.\ref{sec-4} typical dynamics of solutions are analyzed and illustrated.
Finally, Sec.\ref{sec-5} serves for conclusions.

\section{Nonlocal GP and nonlocal isospectral NLS equation}\label{sec-2}

To obtain integrable nonlocal GP equation, we start from the ZS-AKNS spectral problem with time evolution \cite{Zakharov-Shabat-1972,AKNS-1973}
\begin{equation}
\label{lax-a}
\Phi _x= \begin{pmatrix}
-\eta & Q \\
R &\eta
\end{pmatrix}\Phi=M(\eta,Q,R)\,\Phi,
\end{equation}
\begin{equation}
\label{lax-b}
\Phi_t= \begin{pmatrix}
A & B\\
C & -A
\end{pmatrix}
\Phi=N(\eta,Q,R)\,\Phi,
\end{equation}
where $\Phi=(\phi_1,\phi_2)^T$, $Q,R$ are functions of $x,t$ and $\eta$ is the spectral parameter depending on $t$.
The compatibility condition $ M_t-N_x+[M,N]=0 $ gives rise to
\begin{equation}
A=\partial ^{-1}(R,Q)\begin{pmatrix}
-B\\
C
\end{pmatrix} - \eta _t x+A_0\,,
\end{equation}
and
\begin{equation}
  \label{eq8}
\begin{pmatrix}
Q \\ R
\end{pmatrix}_t=L \begin{pmatrix}
-B\\
C
\end{pmatrix} -2 \eta \begin{pmatrix}
-B\\
C
\end{pmatrix}-2 A_0\sigma_3 \begin{pmatrix}
Q\\
R
\end{pmatrix}-2 \eta _t \sigma_3\begin{pmatrix}
xQ\\
xR
\end{pmatrix}\,.
\end{equation}
Here $A_0$ is a function of $\eta$ and $t$, but independent of $x$, and
$$L=-\sigma_3 \partial + 2 \begin{pmatrix}
Q\\
-R
\end{pmatrix} \partial ^{-1}(R,Q),\quad \sigma_3=\begin{pmatrix}
1 & 0\\
0 & -1
\end{pmatrix},\quad \partial=\frac{\partial}{\partial x},
\quad \partial^{-1}\ =\frac{1}{2}\left(\int^{x}_{-\infty}-\int^{+\infty}_{x}\right) \mathrm{d}x.$$
Set
\begin{equation}
\begin{pmatrix}
B\\
C
\end{pmatrix}=\underset{j=1}{\overset{2}{\sum}}\begin{pmatrix}
b_j\\
c_j
\end{pmatrix}\eta^{2-j}   \,,
\end{equation}
and substitute it into \eqref{eq8}. In nonisospectral case ($\eta_t=-2\alpha\eta$ where $\alpha$ is real function of $t$),
letting $A_0=2i \eta^2$, $(b_1,c_1)=-2i(Q,R)$, we obtain the second order nonisospectral AKNS equation
\begin{subequations} \label{nonAKNS}
\begin{align}
&  iQ_t+Q_{xx}-2Q^2R+2i\alpha(xQ)_x=0,\label{AKNSQ}\\
&  iR_t-R_{xx}+2QR^2+2i\alpha(xR)_x=0.\label{AKNSR}
\end{align}
\end{subequations}
With transformation
\begin{equation}
\label{trans 1}
Q=qe^{-\frac{i\alpha}{2}x^2},~~ R=re^{\frac{i\alpha}{2}x^2},
\end{equation}
equation \eqref{nonAKNS} gives rise  to
\begin{subequations} \label{GP-1}
\begin{align}
&  iq_t+q_{xx}-2q^2r+(\delta x^2+i\alpha)q=0,\label{GPq}\\
&  ir_t-r_{xx}+2qr^2-(\delta x^2-i\alpha)r=0,\label{GPr}
\end{align}
\end{subequations}
where $\delta$ is a real constant and here and after the real function $\alpha=\alpha(t)$ is governed by \eqref{a}.
Equation \eqref{GP-1} admits the following reductions
\begin{equation}\label{red}
r(x,t)=\beta q^*(\sigma x,t),~~\beta,\sigma=\pm 1,
\end{equation}
which gives rise to
\begin{equation}
\label{GP00}
iq_t+q_{xx}-2\beta  q^2q^*(\sigma x,t)+(\delta x^2+i\alpha)q=0,
\end{equation}
where $*$ denotes complex conjugate.
When $\beta=-1,\sigma=1$, equation \eqref{GP00} is  the classical  GP equation \eqref{GP2}
while $\sigma=-1$ it is the  nonlocal GP equation
\begin{equation}
\label{nonlocal GP}
iq_t+q_{xx}-2\beta  q^2q^*(-x,t)+(\delta x^2+i\alpha)q=0.
\end{equation}

The transformation \eqref{trans 1} keeps the nonlocal change in \eqref{red},
therefore the following nonlocal nonisospectral NLS from \eqref{nonAKNS} is integrable,
\begin{equation}
\label{nonl-non-NLS}
iQ_t+Q_{xx}-2\beta Q^2Q^*(\sigma x,t)+2i\alpha(xQ)_x=0.
\end{equation}
Classical case of the above equation has been well studied in \cite{Abdselam-2019}.

Note that the classical GP equation \eqref{GP2} admits bilinear form and $N$-soliton solutions.
By transformation $q=g/f$ where $f=f^*$, one can get
the bilinear form of \eqref{GP2}
\begin{subequations}\label{bilinear form 1}
\begin{align}
& (i D_t+D_x^2) g\cdot f=-(\delta x^2+i\alpha)fg, \\
& D_x^2\ f\cdot f=2gg^*,
\end{align}
\end{subequations}
where $D$ is the Hirota bilinear operator defined by \cite{Hirota-1974}
\begin{equation}
 D_x^mD_t^n\ f\cdot g\equiv(\partial_x -\partial_{x^{\prime}})^m (\partial_t-\partial_{t^{\prime}})^n
 f(x,t) g(x^{\prime},t^{\prime})|_{x^{\prime}=x,t^{\prime}=t}.
\end{equation}
Employing the standard procedure of Hirota's method, one can derive 1-,2-,3-soliton solutions for \eqref{bilinear form 1},
which obey the following general form
\begin{subequations}\label{nss}
\begin{align}
& f_N(t,x)=\sum_{\mu=0,1}A_1(\mu)\mathrm{exp}\biggl\{\sum_{j=1}^{2N} \mu_j\xi_j+\sum_{1\leq j<l}^{2N}\mu_j\mu_l\theta_{jl}\biggr\},\\
& g_N(t,x)=\sum_{\mu=0,1}A_2(\mu)\mathrm{exp}\biggl\{\sum_{j=1}^{2N} \mu_j\xi_j+\sum_{1\leq j<l}^{2N}\mu_j\mu_l\theta_{jl}\biggr\},
\end{align}
\end{subequations}
where
\begin{subequations}\label{n}
\begin{align}
& \xi_j=\frac{1}{2}ix^2\sqrt{\delta}\tanh2\sqrt{\delta}t+s_jx\sech2\sqrt{\delta}t
 -\ln\cosh2\sqrt{\delta}t+\frac{is_j^2}{2\sqrt{\delta}}\tanh2\sqrt{\delta}t
 +\xi_j^{(0)},\\
& \xi_{N+j}=\xi^*_j~(j=1,2,\cdots,N),\\
& e^{\theta_{j,N+l}}=\frac{\cosh^2 2\sqrt{\delta}t}{(s_j+s^*_l)^2}~(j,l=1,2,\cdots,N),\\
& e^{\theta_{jl}}=(s_j-s_l)^2\sech2\sqrt{\delta}t,~~e^{\theta_{N+j,N+l}}=e^{\theta^*_{jl}}~(j<l=1,2,\cdots,N),
\end{align}
\end{subequations}
$\xi_j^{(0)}$ and $s_j$ are arbitrary constants,
the summation of $\mu$ means to take all possible $\mu_j=\{0,1\}$ $(j=1,2,\cdots, N)$, and
$A_1(\mu)$ and $A_2(\mu)$ mean that $\mu_j (j=1,2,\cdots, N)$ in the summation of $0$ or 1  meet
\begin{subequations}\label{n}
\begin{align}
& A_1(\mu):\sum_{j=1}^{N}\mu_j=\sum_{j=1}^{N}\mu_{N+j},\\
& A_2(\mu):\sum_{j=1}^{N}\mu_j=1+\sum_{j=1}^{N}\mu_{N+j}.
\end{align}
\end{subequations}
Here and after we only consider the case $\delta > 0$, for the case $\delta < 0$, we can solve it by the same way.

However, the classical Hirota's bilinear operator $D$ does not work for the nonlocal case as $-x$ is involved.
An available  treatment is to consider the unreduced system \eqref{GP-1} rather than \eqref{nonlocal GP}.
Such a reduction technique has been developed in \cite{ChenDLZ-SPAM-2018,ChenZ-AML-2018} recently.

\section{Solutions of the  GP equation}\label{sec-3}

In this section, we apply the reduction technique to construct double Wronskian solutions of both classical and nonlocal GP equations.

\subsection{Solutions of the unreduced system \eqref{GP-1}}

We employ the notation $|\widehat{M-1};\widehat{N-1}|$ introduced in Ref.\cite{Nimmo-NLS-1983} to denote a $(M+N)\times(M+N)$
double Wronskian established as
\begin{equation}
|\widehat{M-1}; \widehat{N-1}|=|\phi^{(M-1)}; \psi^{(N-1)}|=|\phi,\partial_x \phi,\ldots,\partial_x^{M-1} \phi;
\psi,\partial_x \psi,\ldots,\partial_x^{N-1} \psi|,\nonumber
\end{equation}
where $\phi$ and $ \psi $ are $(M+N)$-th column vectors given as
\begin{equation}
\phi=(\phi_1,\phi_2,\ldots,\phi_{M+N})^T,\ \psi=(\psi_1,\psi_2,\ldots,\psi_{M+N})^T. \nonumber
\end{equation}
 When introducing rational transformation
\begin{equation}
q=\frac{g}{f},~~r=\frac{h}{f},
\end{equation}
equation \eqref{GP-1} can be written as bilinear form
\begin{subequations}\label{bilinear form}
\begin{align}
& (i D_t+D_x^2) g\cdot f=-(\delta x^2+i\alpha)gf \label{22a}, \\
& (i D_t-D_x^2) h\cdot f=(\delta x^2-i\alpha)hf \label{22b},\\
& D_x^2\ f\cdot f=-2gh \label{22c}.
\end{align}
\end{subequations}

Employing the Wronskian technique, we have the following theorem.
\begin{theorem}\label{Theorem 1}
The bilinear equations \eqref{bilinear form} allow us the following
double Wronskian solutions
\begin{equation}
\label{wronskian}
f=|\widehat{N-1};\widehat{N-1}|,\quad g=2|\widehat{N-2};\widehat{N}|,
\quad h=-2|\widehat{N};\widehat{N-2}|,
\end{equation}
where the entry vectors $\phi$ and $\psi$ satisfy the  conditions
\begin{subequations}\label{wron-cond-x}
\begin{align}
& \phi_x=A\phi,\quad \phi_t=-2i\phi_{xx} -\frac{i}{2}\delta x^2 \phi,\label{wron-cond-x-a}\\
& \psi_x=-A\psi,\quad  \psi_t=2i\psi_{xx} +\frac{i}{2}\delta x^2 \psi,\label{wron-cond-x-b}
\end{align}
\end{subequations}
in which
\begin{equation}
A=-\frac{i\alpha(t)x}{2}I_{2N}+Q(t),
\label{A}
\end{equation}
where $I_{2N}$ is the $2N\times 2N$  unit matrix, $Q(t)=(Q_{jl}(t))$ is some $2N\times 2N$ matrix depending on $t$
and satisfying
\begin{equation}
Q(t)_t=-2\alpha(t)Q(t),
\label{Q}
\end{equation}
which keeps the compatibility of $\phi_{x,t}=\phi_{t,x}$ and $\psi_{x,t}=\psi_{t,x}$.
\end{theorem}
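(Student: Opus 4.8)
**The plan is to verify that the Wronskian ansatz satisfies each of the three bilinear equations by computing derivatives through the conditions \eqref{wron-cond-x} and reducing everything to Wronskian identities.**

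The approach I would take is the standard Wronskian-technique verification, adapted to the nonisospectral setting. First I would establish the basic derivative formulas for the double Wronskians $f$, $g$, $h$ under the spatial condition $\phi_x = A\phi$, $\psi_x = -A\psi$. The key observation is that $A$ in \eqref{A} splits as a scalar-matrix part $-\tfrac{i\alpha x}{2}I_{2N}$ plus a space-independent matrix $Q(t)$; the scalar part contributes uniformly to every column and will be responsible for generating the $\delta x^2$ and $i\alpha$ terms, while $Q(t)$ governs the off-diagonal structure that produces the usual soliton interactions. I would compute $\partial_x$ of each Wronskian using the column-differentiation rule, expressing $f_x$, $g_x$, $h_x$, and then $f_{xx}$, etc., in terms of Wronskians with shifted column indices such as $|\widehat{N-2},\widehat{N}+1|$ and $|\widehat{N};\widehat{N-2}|$-type objects.

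Next I would compute the time derivatives using $\phi_t = -2i\phi_{xx} - \tfrac{i}{2}\delta x^2\phi$ and its counterpart for $\psi$. The crucial structural point is that because the time evolution is expressed purely through $\phi_{xx}$ (up to the $\delta x^2$ scalar term), $\partial_t$ acting on a Wronskian can again be rewritten as a combination of Wronskians with shifted derivative-columns, so that $f_t$, $g_t$, $h_t$ land in the same space of Wronskian determinants as the spatial derivatives. I would then substitute all of these into the bilinear forms \eqref{22a}--\eqref{22c}. After substitution, I expect the $\delta x^2$ and $i\alpha$ inhomogeneous terms on the right-hand sides to be produced exactly by the scalar part $-\tfrac{i\alpha x}{2}I_{2N}$ of $A$ together with the $-\tfrac{i}{2}\delta x^2$ piece of the time evolution; this is the step that distinguishes the present nonisospectral computation from the classical constant-coefficient NLS case and is where I would be most careful to track the $x$-dependent coefficients.

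The remaining homogeneous combinations of Wronskians should then vanish identically by means of a Laplace-expansion (Pl\"ucker / Jacobi) identity for determinants, exactly as in the classical Wronskian proof of NLS-type bilinear equations. I would finally confirm that \eqref{Q}, namely $Q(t)_t = -2\alpha(t)Q(t)$, is precisely the condition making $A_t$ consistent with $\partial_x\partial_t = \partial_t\partial_x$ on $\phi$ and $\psi$; this guarantees the columns are simultaneously well-defined solutions of \eqref{wron-cond-x-a}--\eqref{wron-cond-x-b} so that the Wronskians are legitimate. \textbf{The main obstacle I anticipate is bookkeeping the $x$-dependent scalar term:} one must show that the spurious terms generated by differentiating the factor $-\tfrac{i\alpha x}{2}I_{2N}$ across many columns reorganize cleanly into the single inhomogeneous coefficient $(\delta x^2 + i\alpha)$, rather than leaving residual $x$-dependent pieces, and verifying this cancellation—using \eqref{a} to relate $\alpha_t$ and $\delta$—is the delicate heart of the argument.
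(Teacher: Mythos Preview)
Your plan is correct and matches the paper's approach: compute the column-shifted Wronskian expressions for $f_x,f_{xx},f_t,g_x,g_{xx},g_t$ from \eqref{wron-cond-x}, substitute into \eqref{22a}--\eqref{22c}, and close with Pl\"ucker-type identities (the paper's Lemma~\ref{lemma 2}). The one ingredient you have not anticipated, and which resolves exactly the obstacle you flag, is the paper's Lemma~\ref{lemma 1}: rather than tracking the $x$-dependent scalar $-\tfrac{i\alpha x}{2}I_{2N}$ column by column, the paper applies a row/column operator identity with $P_{js}=\tfrac{i\alpha x}{2}\pm\partial_x+\tfrac{i\alpha}{2}s\,\partial_x^{-1}$ to obtain trace relations such as $(\mathrm{Tr}\,A)\,f=|\widehat{N-2},N;\widehat{N-1}|-|\widehat{N-1};\widehat{N-2},N|$, then exploits the trivial algebraic identities $f\,\{\mathrm{Tr}(A)[\mathrm{Tr}(A)f]\}=[\mathrm{Tr}(A)f]^2$ and $f\,\{\mathrm{Tr}(A)[\mathrm{Tr}(A)g]\}=[\mathrm{Tr}(A)f][\mathrm{Tr}(A)g]$ to produce quadratic Wronskian relations (equations \eqref{25}--\eqref{27}) that absorb all the inhomogeneous $\alpha$- and $\delta x^2$-terms at once; after this reduction only two Pl\"ucker identities per bilinear equation remain, and the relation \eqref{a} is never invoked explicitly in the verification itself (it enters only through the compatibility \eqref{Q}).
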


The proof is long but not trivial, which will be given in Appendix \ref{app-A}.

\subsection{Reductions of double Wronskians}\label{sec-3-2}

As we have seen \eqref{wronskian} provides solutions through double Wronskians $f, g, h$
for the unreduced equation \eqref{GP-1}. Under the reduction $r(x,t)=\beta q^*(\sigma x,t)$, we obtain the classical
and nonlocal GP equations \eqref{GP2} and \eqref{nonlocal GP}. In the following we present a simple reduction
procedure that enables us to obtain double Wronskian solutions both classical and nonlocal GP equations.

\begin{theorem}\label{Theorem 2}
 The classical and nonlocal GP equations \eqref{GP2} and \eqref{nonlocal GP} admit the following solution
\begin{equation}\label{DW}
q=2\frac{|\widehat{N-2};\widehat{N}|}{|\widehat{N-1};\widehat{N-1}|},
\end{equation}
where $\phi$ and $\psi$, as solutions of matrix equations \eqref{wron-cond-x}, are $2N$-th order column vectors,
 and obey the constraint
\begin{equation}\label{T1}
\psi(x)=T\phi^*(\sigma x),
\end{equation}
in which the $2N\times 2N$ constant matrix $T$ determined through
\begin{subequations}\label{T11}
\begin{align}
& A(x)T+\sigma TA^*(\sigma x)=0,\label{T11-A}\\
& TT^*=\beta\sigma I_{2N},\label{T11-T}
\end{align}
\end{subequations}
where $\beta, \sigma=\pm 1$ respectively.
\end{theorem}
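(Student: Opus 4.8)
The plan is to build on Theorem \ref{Theorem 1}: the triple $(f,g,h)$ of \eqref{wronskian} already solves the unreduced bilinear system \eqref{bilinear form} whenever $\phi,\psi$ obey \eqref{wron-cond-x}, and $q=g/f$, $r=h/f$ then solve \eqref{GP-1}. Since the target reductions \eqref{GP2} and \eqref{nonlocal GP} are exactly \eqref{GP-1} subject to $r(x,t)=\beta q^*(\sigma x,t)$ in \eqref{red}, the whole statement follows once I show that the constraint \eqref{T1} forces
\[
f(x,t)=c\,f^*(\sigma x,t),\qquad h(x,t)=c\,\beta\,g^*(\sigma x,t)
\]
with one and the same nonzero constant $c$, because then $h/f=\beta\,g^*(\sigma x,t)/f^*(\sigma x,t)=\beta q^*(\sigma x,t)=r$. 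So the proof splits into two tasks: (i) checking that imposing $\psi(x)=T\phi^*(\sigma x)$ is consistent with the defining equations \eqref{wron-cond-x}, and (ii) rewriting the reflected--conjugated Wronskians $f^*(\sigma x,t)$ and $g^*(\sigma x,t)$ back in terms of $f,g,h$.

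For (i), I would differentiate the ansatz \eqref{T1}. Writing $\widetilde\phi(x)=\phi^*(\sigma x)$ and using $\phi_x=A\phi$ gives $\widetilde\phi_x=\sigma A^*(\sigma x)\widetilde\phi$, so that $\psi_x=-A\psi$ in \eqref{wron-cond-x-b} holds for a \emph{constant} $T$ precisely when $A(x)T+\sigma TA^*(\sigma x)=0$, which is \eqref{T11-A}. A parallel computation on the time part, using $\sigma^2=1$ together with $\phi_t=-2i\phi_{xx}-\tfrac{i}{2}\delta x^2\phi$, shows the second equation of \eqref{wron-cond-x-b} is then satisfied automatically, with no further condition on $T$. (The explicit form \eqref{A} of $A$ is what makes \eqref{T11-A} solvable; for the theorem I only need that such $T$ exists.)

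For (ii), the substantive computation, I would first propagate \eqref{T1} to every derivative column: $\partial_x^k\psi(x)=\sigma^k T\,(\partial_x^k\phi)^*(\sigma x)$, and dually, after conjugating \eqref{T1}, $\partial_x^k\phi(x)=\sigma^k(T^*)^{-1}(\partial_x^k\psi)^*(\sigma x)$. Substituting these into the conjugated, reflected determinants replaces every $\phi$-column of $f^*(\sigma x,t)$ by a $\psi$-column carried by $T^{-1}$, and every $\psi$-column by a $\phi$-column carried by $T^*$. Invoking \eqref{T11-T} in the form $T^{-1}=\beta\sigma\,T^*$ makes all $2N$ columns share the common left factor $T^*$, so a single $\det(T^*)$ pulls out; the scalar powers of $\sigma$ collect to an even total and drop; and a block interchange realigning the $\phi$- and $\psi$-column groups produces a controlled sign. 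The same bookkeeping applied to $g^*(\sigma x,t)$ turns it into $h$. Carrying the constants through, I expect $f^*(\sigma x,t)=(-1)^N(\beta\sigma)^N\det(T^*)\,f$ and $g^*(\sigma x,t)=(-1)^N\sigma(\beta\sigma)^{N-1}\det(T^*)\,h$, so that the prefactor linking $f^*$ to $f$ is exactly $\beta$ times the one linking $g^*$ to $h$; substituting into $\beta\,g^*/f^*$ then collapses to $h/f=r$.

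The hard part is the determinant bookkeeping in step (ii): keeping exact track of the three sources of sign and scale — the scalars $\sigma^k$ pulled from individual columns, the parity of the block interchange that reorders the $\phi$- and $\psi$-groups, and the powers of $\beta\sigma$ generated when \eqref{T11-T} is used to unify the left factor as $T^*$. The essential point is not the individual constants but that the prefactor relating $f^*(\sigma x,t)$ to $f$ equals $\beta$ times the one relating $g^*(\sigma x,t)$ to $h$, so that these common factors cancel in the quotient $h/f$ and leave the clean involution $r=\beta q^*(\sigma x,t)$. Verifying this cancellation uniformly for both parities of $N$ and for all four sign choices $\beta,\sigma=\pm1$ is where the care is needed.
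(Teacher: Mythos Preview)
Your proposal is correct and follows essentially the same route as the paper: verify that \eqref{T1} is compatible with \eqref{wron-cond-x} via \eqref{T11-A}, then compute $f^*(\sigma x,t)$ and $g^*(\sigma x,t)$ by pulling the common factor $T^*$ out of the determinant (using \eqref{T11-T}) and tracking the column-reordering and $\sigma$-power signs, obtaining exactly the prefactors you write (the paper records them as $(\beta\sigma)^N(-1)^{N^2}|T^*|$ and $(\beta\sigma)^{N-1}(-1)^{N^2}\sigma^{2N-1}|T^*|$, which agree with yours since $(-1)^{N^2}=(-1)^N$ and $\sigma^{2N-1}=\sigma$). The final quotient cancellation giving $r=\beta q^*(\sigma x,t)$ is identical.
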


\begin{proof}
First, it can be verified that \eqref{wron-cond-x} and \eqref{T1} are compatible under \eqref{T11-A},
 i.e. if we have \eqref{wron-cond-x-a} and \eqref{T11-A}, then $\psi(x)$ defined by \eqref{T1}
 must satisfy the condition \eqref{wron-cond-x-b}.

Next,  we introduce notation (cf.\cite{ChenZ-AML-2018})
\begin{equation}
\widehat{\varphi}^{(s)}(a x)_{[b x]}=\left(\varphi(a x), \partial_{b x} \varphi(a x), \cdots, \partial_{b x}^{s} \varphi(a x)\right),~~ a,b=\pm1,
\end{equation}
and thus, under condition \eqref{T1}, we can rewrite $f, g, h$ in \eqref{wronskian} as
\begin{subequations}
\begin{align}
 &f(x)=|\widehat{\phi}^{(N-1)}(x)_{[x]}; \widehat{\psi}^{(N-1)}(x)_{[x]}|=|\widehat{\phi}^{(N-1)}(x)_{[x]}; T\widehat{\phi}^{*(N-1)}(\sigma x)_{[x]}|,\\
 &g(x)=2|\widehat{\phi}^{(N-2)}(x)_{[x]}; \widehat{\psi}^{(N)}(x)_{[x]}|=2|\widehat{\phi}^{(N-2)}(x)_{[x]}; T\widehat{\phi}^{*(N)}(\sigma x)_{[x]}|,\\
 &h(x)=-2|\widehat{\phi}^{(N)}(x)_{[x]}; \widehat{\psi}^{(N-2)}(x)_{[x]}|=-2|\widehat{\phi}^{(N)}(x)_{[x]}; T\widehat{\phi}^{*(N-2)}(\sigma x)_{[x]}|.
\end{align}
\end{subequations}
By calculation we find
\begin{align*}
f^*(\sigma x)=&|\widehat{\phi}^{(N-1)}(\sigma x)_{[\sigma x]}; T\widehat{\phi}^{*(N-1)}(\sigma^2 x)_{[\sigma x]}|^*\\
=&|\widehat{\phi}^{*(N-1)}(\sigma x)_{[\sigma x]}; T^*\widehat{\phi}^{(N-1)}(x)_{[\sigma x]}|\\
=&(\beta\sigma)^{N}|T^*||T\widehat{\phi}^{*(N-1)}(\sigma x)_{[\sigma x]}; \widehat{\phi}^{(N-1)}(x)_{[\sigma x]}|\\
=&(\beta\sigma)^{N}|T^*|(-1)^{N^2}|\widehat{\phi}^{(N-1)}(x)_{[x]}; T\widehat{\phi}^{*(N-1)}(\sigma x)_{[x]}|\\
=&(\beta\sigma)^{N}(-1)^{N^2}|T^*|f(x),
\end{align*}
and similarly,
\begin{align*}
g^*(\sigma x)
= (\beta\sigma)^{N-1}(-1)^{N^2}\sigma^{2N-1}|T^*|h(x),
\end{align*}
which give rise to
\begin{eqnarray*}
q^*(\sigma x)=\frac{g^*(\sigma x)}{f^*(\sigma x)}
=\frac{(\beta\sigma)^{N-1}(-1)^{N^2}\sigma^{2N-1}|T^*|h(x)}{(\beta\sigma)^{N}(-1)^{N^2}|T^*|f(x)}= \frac{1}{\beta}r(x),
\end{eqnarray*}
i.e. $r(x)=\beta q^*(\sigma x)$. Thus we finish the proof.
\end{proof}

\subsection{Solutions of the classical and nonlocal GP equations}\label{sec-3-3}

In this section, we list Wronskian elements of solutions for the classical and nonlocal GP equations.
We only consider the case $\delta>0$ and hereafter we take $\delta=\nu^2$ where $\nu>0\in \mathbb{R}$.

\subsubsection{Solutions to $Q(t)$ and $T$}\label{sec-3-3-1}

Noticing that due to the form of $A$ in \eqref{A}, the constraint conditions on the $T$ and $A$ in \eqref{T11} can be converted to the conditions on
$T$ and $Q(t)$, i.e.
\begin{subequations}\label{T12}
\begin{align}
& Q(t)T+\sigma TQ^*(t)=0,\label{T12-Q}\\
& TT^*=\beta\sigma I_{2N}.\label{T12-T}
\end{align}
\end{subequations}
As for solutions $T$ and $Q(t)$ of \eqref{T12}, if we assume that they are block matrices of the form
\begin{equation}
T=\left(
                   \begin{array}{cc}
                     T_1 & T_2 \\
                     T_3 & T_4 \\
                   \end{array}
                 \right),~~~
Q(t)=\left(
                   \begin{array}{cc}
                     K_1 & 0 \\
                     0 & K_4 \\
                   \end{array}
                 \right),
\end{equation}
where $T_i$ and $K_i$ are $N\times N$ matrices,
then, solutions to \eqref{T11} are given in Table 1, where $\mathbf{B}_N$ and $\mathbf{H}_N$ are  $N\times N$  matrices:
\begin{table}[h]
\begin{center}
\begin{tabular}{|c|c|c|}
\hline
   ($\beta, \sigma$)   &    $T$  &  $Q(t)$     \\
\hline
     $(1,1)$ &  $T_1=T_4=\mathbf{0}_N, T_3=T_2=\mathbf{I}_N$& $K_1=-K_4^*=\mathbf{B}_N $ \\
\hline
     $(-1,1)$ &  $T_1=T_4=\mathbf{0}_N, T_3=-T_2=\mathbf{I}_N$& $K_1=-K_4^*=\mathbf{B}_N$ \\
                              \hline
     $(1,-1)$ &  $T_1=T_4=\mathbf{0}_N, T_3=-T_2=\mathbf{I}_N$& $K_1=K_4^*=\mathbf{B}_N$ \\
                              \hline
     $(-1,-1)$ &  $T_1=T_4=\mathbf{0}_N, T_3=T_2=\mathbf{I}_N$& $K_1=K_4^*=\mathbf{B}_N$ \\
\hline
\end{tabular}
\caption{$T$ and $Q(t)$ for the GP equation}
\label{Tab-1}
\end{center}
\end{table}

In addition, \eqref{T12} admits more solutions: for the case $(\beta,\sigma)=(1,1)$,
\begin{subequations}\label{real-11}
\begin{align}
& T_1=-T_4=\mathbf{I}_N,~ T_2=T_3=\mathbf{0}_N ~{\rm or}~T=\mathbf{I}_{2N},\\
& K_1=i\mathbf{B}_N,~ K_4=i\mathbf{H}_N,
\end{align}
\end{subequations}
where $\mathbf{B}_N,\mathbf{H}_N\in \mathbb{R}_{N\times N}$, and for the case $(\beta,\sigma)=(-1,-1)$,
\begin{subequations}\label{real-1-1}
\begin{align}
& T_1=-T_4=\mathbf{I}_N,~ T_2=T_3=\mathbf{0}_N ~{\rm or}~T=\mathbf{I}_{2N},\\
& K_1=\mathbf{B}_N,~ K_4=\mathbf{H}_N,
\end{align}
\end{subequations}
where $\mathbf{B}_N,\mathbf{H}_N\in \mathbb{R}_{N\times N}$.
.

\subsubsection{Case by case}\label{sec-3-3-2}

\noindent{Case 1: $\mathbf{B}_N$ being complex diagonal matrix}

When $\mathbf{B}_N$ is diagonal and given by
\begin{equation}\label{diag}
\mathbf{B}_N={\rm Diag}[\gamma_1(t), \gamma_2(t),\cdots, \gamma_N(t)],
\end{equation}
with
\begin{equation}\label{b}
\gamma_j(t)=k_j\sech2 \nu t,~~ k_j\in \mathbb{C}, ~\nu\in \mathbb{R}~(j=1,2,\cdots,N),
\end{equation}
$\phi$ is taken as\footnote{There can be a multiplier $e^{\int \alpha(t) dt}$ in front of $\phi$,
but we have removed it because it contributes nothing to the solution $q=g/f$.}
\begin{equation}\label{phi11}
\phi=\left(a_1^-e^{-\theta_1(x)},a_2^-e^{-\theta_2(x)},\cdots,a_N^-e^{-\theta_N(x)},
a_1^+e^{\theta_1^*(\sigma x)},a_2^+e^{\theta_2^*(\sigma x)},\cdots,a_N^+e^{\theta_N^*(\sigma x)}\right)^T,
\end{equation}
where
\begin{equation}
\theta_j(x)=\frac{i}{4}\alpha x^2-\gamma_j(t)x+2i\int \gamma_j(t)^2dt+ \theta^{(0)}_j
\label{wron-entr-b}
\end{equation}
with $a_j^{\pm}, k_j, \theta^{(0)}_j \in \mathbb{C}$.
Note that $\beta$ takes effects in defining $\psi$.

As examples we list out 1SS for the general GP equation \eqref{GP00}:
\begin{equation}
\label{1-sol-classical-GP21}
|q_1|_{(\beta=\pm1,\sigma=1)}^2=\frac{ 16 a_{1}^2 \sech^2(2\nu t)}{(e^{2X}-\beta e^{-2X})^2},~~~X=c_1-a_1x \sech 2 \nu t -\frac{2a_1b_1}{\nu}\tanh 2\nu t,
\end{equation}
\begin{equation}
\label{1-sol-nonlocal-GP-1}
|q_1|_{(\beta=\pm1,\sigma=-1)}^2=  \frac{8 {b_1}^2 e^{-4 a_1 x \sech 2 \nu t}\,\sech^2 2 \nu t}
{\cosh \left(4 c_1-\frac{8 a_1 b_1}{\nu} \tanh  2 \nu t \right)+\beta \cos \left(4 b_1 x \sech 2 \nu t \right)},
\end{equation}
here and after we take
\[a^{\pm}_j = 1,~ k_j = a_j + ib_j, ~\theta_j^{(0)} = c_j + i d_j\]
for convenience.

\vskip 6pt
\noindent{Case 2: $\mathbf{B}_N$ being Jordan matrix}

When $\mathbf{B}_N$ is a Jordan matrix as follows:
\begin{equation}
\label{B-Jordan}
\mathbf{B}_N=\mathbf{J}_N[\gamma_1(t)]=\begin{pmatrix}
\gamma_1(t)& 0 & 0 & \ldots & 0 & 0\\
\kappa & \gamma_1(t) & 0 & \ldots & 0 & 0\\
\ldots & \ldots & \ldots & \ldots & \ldots & \ldots\\
0 & 0 & 0 & \ldots & \kappa & \gamma_1(t)
\end{pmatrix}_{N\times N},
\end{equation}
where $\gamma_1(t)$ is given as equation \eqref{b}, $\kappa=\sech2 \nu t$, we have
\begin{align}\
\phi=& \biggl(a_1^-e^{-\theta_1(x)},\frac{\partial_{k_1}}{1!}(a_1^-e^{-\theta_1(x)}),\cdots,\frac{\partial_{k_1}^{N-1}}{(N-1)!}(a_1^-e^{-\theta_1(x)}),
\nonumber\\
&~~~~~~ a_1^+e^{\theta_1^*(\sigma x)},\frac{\partial_{k_1^*}}{1!}(a_1^+e^{\theta_1^*(\sigma x)}),\cdots,
\frac{\partial_{k_1^*}^{N-1}}{(N-1)!}(a_1^+e^{\theta_1^*(\sigma x)})\biggr)^T.\label{phi4}
\end{align}

\noindent
\textit{Remark 1~} One can replace the above $\phi$ by $\mathcal{T}\phi$ where $\mathcal{T}$ is diagonal block matrix
$\mathcal{T}=\mathrm{Diag}[\mathcal{A}_N, \mathcal{B}_N]$ with $\mathcal{A}_N$ and $ \mathcal{B}_N$
being $N$-th order arbitrary constant lower triangular Toeplitz matrices (cf.\cite{ZDJ-arxiv,ZhaZSZ-RMP-2014}).

\noindent
\textit{Remark 2~} A general case for $\mathbf{B}_N$ is the diagonal block form
\begin{equation}
\mathbf{B}_N=\mathrm{Diag}\left(\mathbf{J}_{h_1}[\gamma_1(t)],\mathbf{J}_{h_2}[\gamma_2(t)],\cdots,\mathbf{J}_{h_s}[\gamma_s(t)],
\mathrm{Diag}[\gamma_{s+1}(t),\cdots, \gamma_{s+m}(t)]\right),
\end{equation}
where each $\mathbf{J}_{h_j}[\gamma_j(t)]$ is an $h_j\times h_j$ Jordon block matrix defined as \eqref{B-Jordan},
$\mathrm{Diag}[\gamma_{s+1}(t),\cdots, \gamma_{s+m}(t)]$ is an $m\times m$ diagonal matrix and $\sum_{j=1}^{s} h_j+m=N$.
In this case, $\phi$ is just composed accordingly since \eqref{wron-cond-x} is a linear system of $\phi$.

\vskip 6pt
\noindent{Case 3: $\mathbf{B}_N,\mathbf{H}_N$ being real matrices }

For the case \eqref{real-11} where $(\beta,\sigma)=(1,1)$, the diagonal $\mathbf{B}_N$ and $\mathbf{H}_N$ are
\begin{equation}\label{B,H}
\mathbf{B}_N={\rm Diag}[\gamma_1(t),\gamma_2(t),\cdots, \gamma_N(t)],
~~\mathbf{H}_N={\rm Diag}[\omega_1(t),\omega_2(t),\cdots, \omega_N(t)]
\end{equation}
with
\begin{equation}\label{b,h}
\gamma_j(t)=k_j\sech2\nu t,~~ \omega_j(t)=l_j\sech2\nu t,~~ k_j,\l_j\in \mathbb{R}, ~(j=1,2,\cdots,N),
\end{equation}
and we have
\begin{equation}
\phi=\left(a_1^-e^{-\theta_1},a_2^-e^{-\theta_2},\cdots,a_N^-e^{-\theta_N},
a_1^+e^{-\rho_1},a_2^+e^{-\rho_2},\cdots,a_N^+e^{-\rho_N}\right)^T,
\end{equation}
where
\begin{equation}\label{theta,rho}
\theta_j=\frac{i}{4}\alpha x^2-i \gamma_j(t)x- 2i\int \gamma_j(t)^2dt+ \theta^{(0)}_j,
~~\rho_j=\frac{i}{4}\alpha x^2-i \omega_j(t)x-2i\int \omega_j(t)^2dt+ \rho^{(0)}_j
\end{equation}
with $\theta^{(0)}_j, \rho^{(0)}_j\in \mathbb{C}$.
In the case $T_1=-T_4=\mathbf{I}_N,~ T_2=T_3=\mathbf{0}_N$ and $N=1$ we have
\begin{equation}
|q_1|_{(\beta=1,\sigma=1)}^2=\frac{(\gamma_1(t)-\omega_1(t))^2}{\cos^2(V_1-V_2)},
\end{equation}
where
\begin{equation}\label{A1,B1}
V_1=\frac{1}{4}\alpha x^2-\gamma_1(t)x- 2\int \gamma_1(t)^2dt+d_1,~~~V_2=\frac{1}{4}\alpha x^2-\omega_1(t)x- 2\int \omega_1(t)^2dt+n_1
\end{equation}
and we have taken $a^{\pm}_j = 1$, $\theta_j^{(0)} = c_j + i d_j$ and $\rho_j^{(0)} = m_j + i n_j$ for convenience.
In the case $T=\mathbf{I}_{2N}$, we get
\begin{equation}
|q_1|_{(\beta=1,\sigma=1)}^2=\frac{(\gamma_1(t)-\omega_1(t))^2}{\sin^2(V_1-V_2)},
\end{equation}
with $V_1,V_2$ defined in \eqref{A1,B1}.
If $\mathbf{B}_N$ and $\mathbf{H}_N$ are corresponding Jordan form as in \eqref{B-Jordan},
we have
\begin{align}
\phi=&\biggl(a_1^-e^{-\theta_1},\frac{\partial_{k_1}}{1!}(a_1^-e^{-\theta_1}),\cdots,\frac{\partial_{k_1}^{N-1}}{(N-1)!}(a_1^-e^{-\theta_1}),\nonumber\\
&~~~~ a_1^+e^{-\rho_1},\frac{\partial_{l_1}}{1!}(a_1^+e^{-\rho_1}),\cdots,\frac{\partial_{l_1}^{N-1}}{(N-1)!}(a_1^+e^{-\rho_1})\biggr)^T \label{51}
\end{align}
with $\theta_1, \rho_1$ defined in \eqref{theta,rho}.

For the case \eqref{real-1-1} where $(\beta,\sigma)=(-1,-1)$,
$\mathbf{B}_N$ and $\mathbf{H}_N$ are diagonal matrices \eqref{B,H} with elements \eqref{b,h},
we have
\begin{equation}
\phi=\left(a_1^-e^{-\theta_1},a_2^-e^{-\theta_2},\cdots,a_N^-e^{-\theta_N},
a_1^+e^{-\rho_1},a_2^+e^{-\rho_2},\cdots,a_N^+e^{-\rho_N}\right)^T,
\end{equation}
where
\begin{equation}
\theta_j=\frac{i}{4}\alpha x^2-\gamma_j(t)x+2i\int \gamma_j(t)^2dt+ \theta^{(0)}_j,
~~\rho_j=\frac{i}{4}\alpha x^2-\omega_j(t)x+2i\int \omega_j(t)^2dt+ \rho^{(0)}_j
\label{53}
\end{equation}
with $\gamma_j(t), \omega_j(t)$ defined in \eqref{b,h} and $\theta^{(0)}_j,\rho^{(0)}_j\in \mathbb{C}$.
When $T_1=-T_4=\mathbf{I}_N,~ T_2=T_3=\mathbf{0}_N$   we can get solution
\begin{equation}
|q_1|_{(\beta=-1,\sigma=-1)}^2=\frac{4(\gamma_1(t)-\omega_1(t))^2e^{-2(\gamma_1(t)+\omega_1(t))x}}{e^{2(\gamma_1(t)-\omega_1(t))x}
+e^{-2(\gamma_1(t)-\omega_1(t))x}+2\cos 2(W_1-W_2)},
\end{equation}
and if  $T=\mathbf{I}_{2N}$ we have
\begin{equation}
|q_1|_{(\beta=-1,\sigma=-1)}^2=\frac{4(\gamma_1(t)-\omega_1(t))^2e^{-2(\gamma_1(t)+\omega_1(t))x}}{e^{2(\gamma_1(t)-\omega_1(t))x}
+e^{-2(\gamma_1(t)-\omega_1(t))x}-2\cos2(W_1-W_2)},
\end{equation}
where
\begin{equation}
W_1=\frac{1}{4}\alpha x^2+ 2\int \gamma_1(t)^2dt+d_1,~~~W_2=\frac{1}{4}\alpha x^2+2\int \omega_1(t)^2dt+n_1.
\end{equation}
If $\mathbf{B}_N$ and $\mathbf{H}_N$ are corresponding Jordan form as in \eqref{B-Jordan},
$\phi$ takes the form \eqref{51} but where $\theta_1$ and $\rho_1$ are given by \eqref{53}.

\section{Localized dynamics of the classical and nonlocal GP equations}\label{sec-4}

\subsection{Classical case}\label{sec-4-1}

\subsubsection{1SS}\label{sec-4-1-1}

It is interesting that many solutions we derived show localized characteristics in both space and time.
Consider the classical GP equation \eqref{GP2} with $\delta>0$.
We can rewrite 1SS \eqref{1-sol-classical-GP21} with $\beta=-1$, which reads
\begin{equation}\label{1-sol-GP2}
|q_1|_{(\beta=-1,\sigma=1)}^2= 4a_{1}^2 \sech ^2(2\nu t)~
\sech^2 \left [2a_{1}\sech  2(\nu t) \left(x+\frac{2b_1}{\nu}\sinh 2\nu t-h_1 \cosh 2\nu t\right) \right],
\end{equation}
where $h_1=c_1/a_1$.
This solution provides a bell-shaped soliton traveling with a localized time dependent amplitude $4a_{1}^2 \sech ^2(2\nu t)$, top trajectory
\begin{equation}
x(t)=-\frac{2b_1}{\nu}\sinh 2\nu t+h_1\cosh2\nu t
=\left(-\frac{ b_1}{\nu}+\frac{h_1 }{2 }\right)e^{2\nu t}+ \left(\frac{ b_1}{\nu}+\frac{h_1 }{2 }\right) e^{-2 \nu t},
\label{top trace-GP}
\end{equation}
velocity
\begin{equation}
x^\prime (t)=(\nu h_1-2b_1)e^{2\nu t}- (\nu h_1+2b_1)e^{-2\nu t},
\label{velocity-GP}
\end{equation}
and vertex at $(x,t)=(h_1,0)$.
The above top trajectory can be further described according to the sign of $s=(\nu h_1-2b_1)(\nu h_1+2b_1)$:
when $s>0$, $x(t)$ runs like $\cosh 2 \nu t $, when $s<0$, $x(t)$ runs like $\sinh 2 \nu t $,
and when $s=0$, $x(t)$ runs like $e^{2 \nu t}$ or $e^{-2\nu t}$ or in particular $x(t)$ is stationary when $b_1=c_1=0$.
Like the solitons of the NLS equation,
the amplitude and velocity for \eqref{1-sol-GP2} are governed by two independent parameters,
i.e. the real part $a_1$ and imaginary part $b_1$ of $k_1$,
which differs from the KdV solitons.
Fig.\ref{fig-1}(a) and Fig.\ref{fig-1}(b) depict a moving localized wave and a stationary one, respectively.


\captionsetup[figure]{labelfont={bf},name={Fig.},labelsep=period}
\begin{figure}[ht]
\centering
\subfigure[ ]{
\begin{minipage}[t]{0.35\linewidth}
\centering
\includegraphics[width=2.1in]{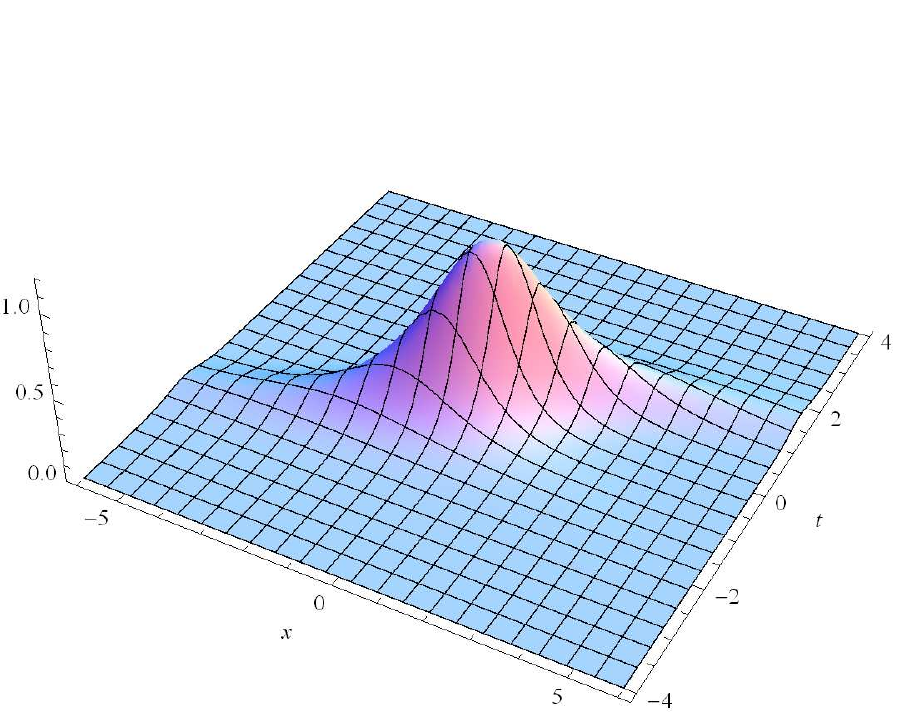}
\end{minipage}%
}%
\subfigure[ ]{
\begin{minipage}[t]{0.35\linewidth}
\centering
\includegraphics[width=2.1in]{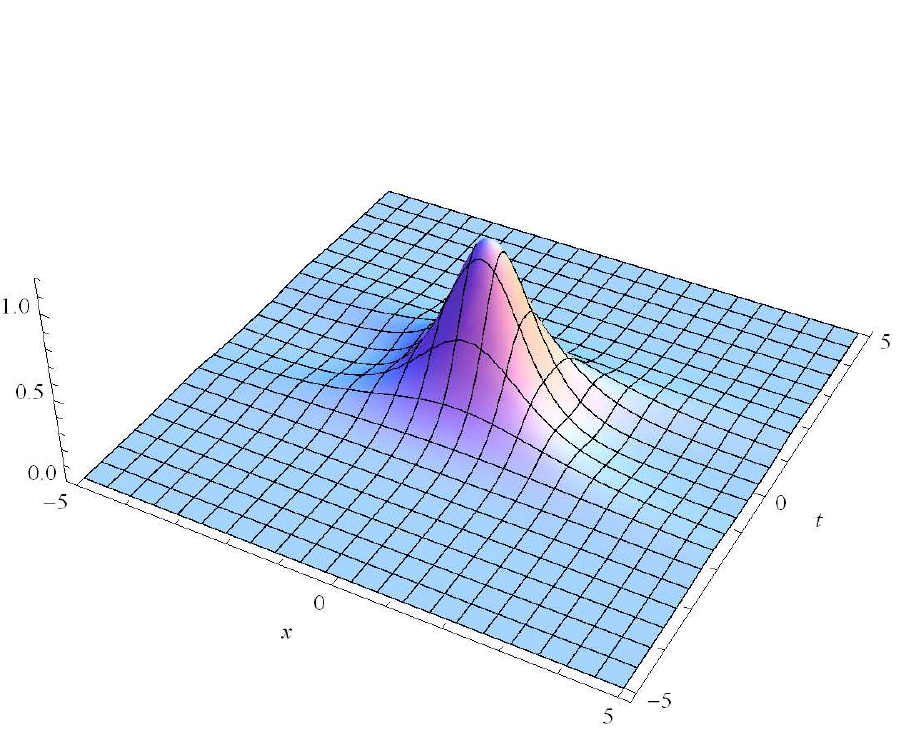}
\end{minipage}%
}
\caption{Shape and motion of 1SS to the classical GP equation \eqref{GP2}.
(a) A moving soliton given by \eqref{1-sol-GP2} for $k_1=0.5-0.3i$, $\theta^{(0)}_1=0$ and $\nu=0.6$.~
(b) A stationary soliton given by \eqref{1-sol-GP2} for $k_1=0.5$, $\theta^{(0)}_1=0$ and $\nu=0.6$.~
}
\label{fig-1}
\end{figure}


\subsubsection{2SS}\label{sec-4-1-2}

Fig.\ref{fig-2}  shows a head-on collision of two solitons given by
\begin{equation}\label{2-sol-classical-GP-a}
|q_2|_{(\beta=-1,\sigma=1)}^2=\frac{g g^*}{f^2},~~~ f=|\phi,~\partial_x \phi; ~\psi,~ \partial_x \psi|,~~~
g=2|\phi;~ \psi, ~\partial_x \psi, ~\partial^2_x \psi|
\end{equation}
with
\begin{equation}\label{2-sol-classical-GP-b}
\phi=\bigl(e^{-\theta_1},e^{-\theta_2};e^{\theta_1^*},e^{\theta_2^*}\bigr)^T,~~~
\psi=\bigl(-e^{\theta_1},-e^{\theta_2};e^{-\theta_1^*},e^{-\theta_2^*}\bigr)^T.
\end{equation}


\captionsetup[figure]{labelfont={bf},name={Fig.},labelsep=period}
\begin{figure}[ht]
\centering
\subfigure[ ]{
\begin{minipage}[t]{0.35\linewidth}
\centering
\includegraphics[width=2.1in]{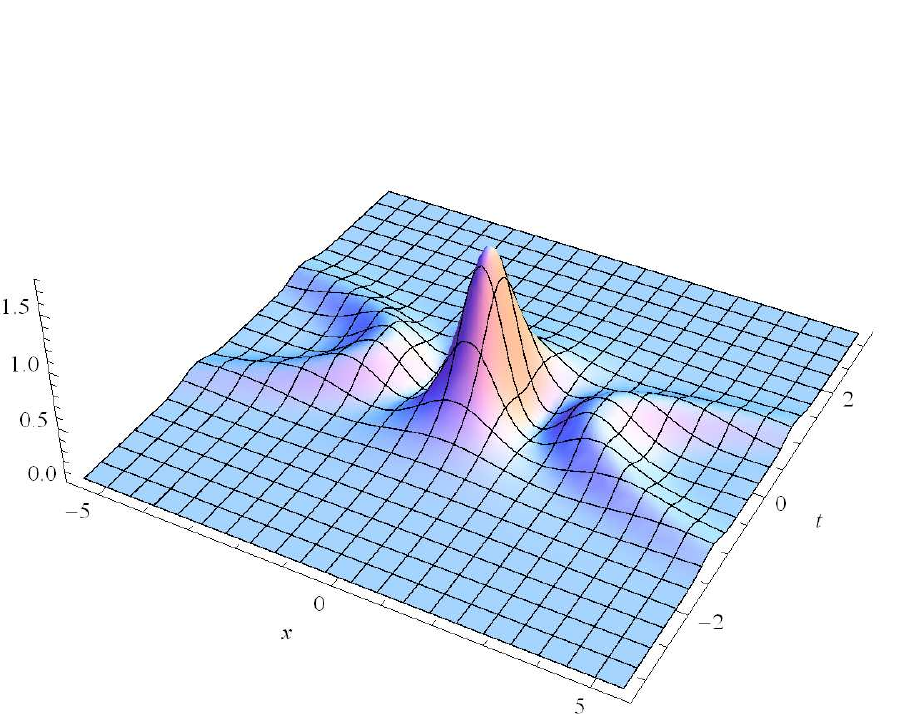}
\end{minipage}%
}%
\subfigure[ ]{
\begin{minipage}[t]{0.35\linewidth}
\centering
\includegraphics[width=2.1in]{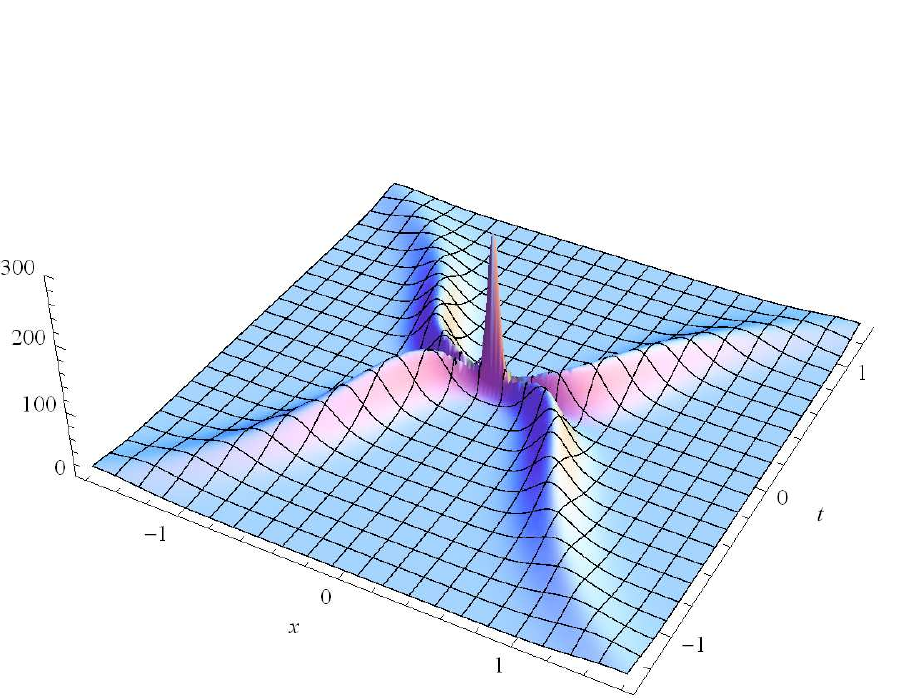}
\end{minipage}%
}
\caption{Two-solition interactions of the classical GP equation \eqref{GP2}.
(a) A head-on collision of two solitons
given by \eqref{2-sol-classical-GP-a} and \eqref{2-sol-classical-GP-b} for $k_1=0.3+0.5i$, $k_2=0.3-0.5i$, $\theta^{(0)}_1=\theta^{(0)}_2=0$ and $\nu=0.6$.~
(b) Non-periodic interaction of two solitions with same speed,  given by \eqref{2-sol-classical-GP-a} and \eqref{2-sol-classical-GP-b}
for $k_1=4, k_2=4.1$, $\theta^{(0)}_1=\theta^{(0)}_2=0$ and $\nu=0.6$.
}
\label{fig-2}
\end{figure}


When $b_1=b_2=b$ as well as $h_1 =h_2 =h$ but $a_1\neq a_2$ in \eqref{2-sol-classical-GP-b}, we have
\begin{equation}
|q_2(x,t)|_{(\beta=-1,\sigma=1)}^2
=\frac{16(a^2_1-a^2_2)^2 u(x,t) \sech ^2(2\nu t)}{v(x,t)^2},
\end{equation}
with
\begin{align*}
u(x,t)= &~2a^2_1+2a^2_2+a^2_1\left(e^{2Y_2}+e^{-2Y_2}\right)+a^2_2\left(e^{2Y_1}+e^{-2Y_1}\right) \nonumber\\
& -2a_1 a_2(e^{Y_1+Y_2}+e^{-(Y_1+Y_2)}+e^{Y_1-Y_2}+e^{Y_2-Y_1})\cos2(Z_1-Z_2),
\end{align*}
\begin{align*}
v(x,t)=& e^{-\frac{1}{2}(X_1+X_4)}\left[(a^2_1+a^2_2)\left(e^{X_1}+e^{X_2}+e^{X_3}+e^{X_4}\right)-2a_1 a_2\left(e^{X_1}-e^{X_2}-e^{X_3}+e^{X_4}\right)\right]\\
&~~ -8a_1 a_2 \cos2X_5,
\end{align*}
where
\begin{align*}
&Y_i=-2a_{i}h+2a_{i}x \sech 2\nu t+\frac{4a_{i}b\tanh 2\nu t}{\nu},\\
& Z_i=d_i-bx\sech2\nu t+\frac{(4a_{i}^2-b^2+4\nu^2 x^2)\tanh2\nu t}{\nu}~(i=1,2),\\
& X_1=4(a_1+a_2)x\sech2\nu t+\frac{4(a_1+a_2)b\tanh2\nu t}{\nu},\\
& X_2=4a_1h+4a_2x\sech2\nu t-\frac{4(a_1-a_2)b\tanh2\nu t}{\nu},\\
& X_3=4a_2h+4a_1x\sech2\nu t-\frac{4(a_2-a_1)b\tanh2\nu t}{\nu},\\
& X_4=4(a_1+a_2)h-\frac{4(a_1+a_2)b\tanh2\nu t}{\nu},\\
& X_5=d_1-d_2+(a^2_1-a^2_2)\frac{\tanh 2\nu t}{\nu}.
\end{align*}
In this case, the two solitons can travel with  same velocity
$$x^\prime (t)=(\nu h-2b) e^{2\nu t}-(\nu h+2b) e^{-2\nu t},$$
and same top trajectory
\begin{equation}
x(t)=-\frac{2b}{\nu}\sinh 2\nu t+h\cosh2\nu t.
\label{top trace-NNLS-2(2-sol)}
\end{equation}
In addition, the value of $|q_2|^2$ on the curve \eqref{top trace-NNLS-2(2-sol)} is given as
\begin{equation}\label{2-sol-GP2-period}
|q_2(-\frac{2b}{\nu}\sinh 2\nu t+h\cosh2\nu t,t)|^2
=\frac{4(a^2_1-a^2_2)^2\sech^2(2\nu t)}{a_1^2 +a^2_2
-2a_1 a_2\cos \left[2(d_1-d_2)+2(a^2_1-a^2_2)\frac{\tanh 2\nu t}{\nu}\right]},
\end{equation}
which allows variety of interaction behaviors.

First, noting that $|\tanh 2\nu t|<1$, when $|\varpi|\geq 1$ where
\begin{equation}
\varpi=\frac{\nu(\pi-d_1+d_2)}{a_1^2-a_2^2},
\end{equation}
there is no periodic interaction, as depicted in Fig.\ref{fig-2}(b).
Second, periodic-like interaction can happen when $|\varpi|<1$. In this case, the wave oscillates in each interval
$[t_s,t_{s+1}]$ where
$t_s=\frac{1}{2\nu}\mathrm{artanh}\Bigl[\frac{\nu(s  \pi - (d_1-d_2))}{a_1^2-a^2_2}\Bigr]$ if $a_1\cdot a_2 >0$
and $t_s=\frac{1}{2\nu}\mathrm{artanh}\Bigl[\frac{\nu((s+\frac{1}{2})  \pi - (d_1-d_2))}{a_1^2-a^2_2}\Bigr]$ if $a_1\cdot a_2 <0$.
The dimension of each oscillation is estimated $16a_1a_2\sech^2(2\nu t_s)$.
It is interesting that \eqref{2-sol-GP2-period} indicates the oscillation is carried by a bell-shape wave.
We describe two stationary ($x(t)=0$) periodic-like solutions in Fig.\ref{fig-3} with large oscillation and small oscillation.


\captionsetup[figure]{labelfont={bf},name={Fig.},labelsep=period}
\begin{figure}[ht]
\centering
\subfigure[ ]{
\begin{minipage}[t]{0.4\linewidth}
\centering
\includegraphics[width=2.1in]{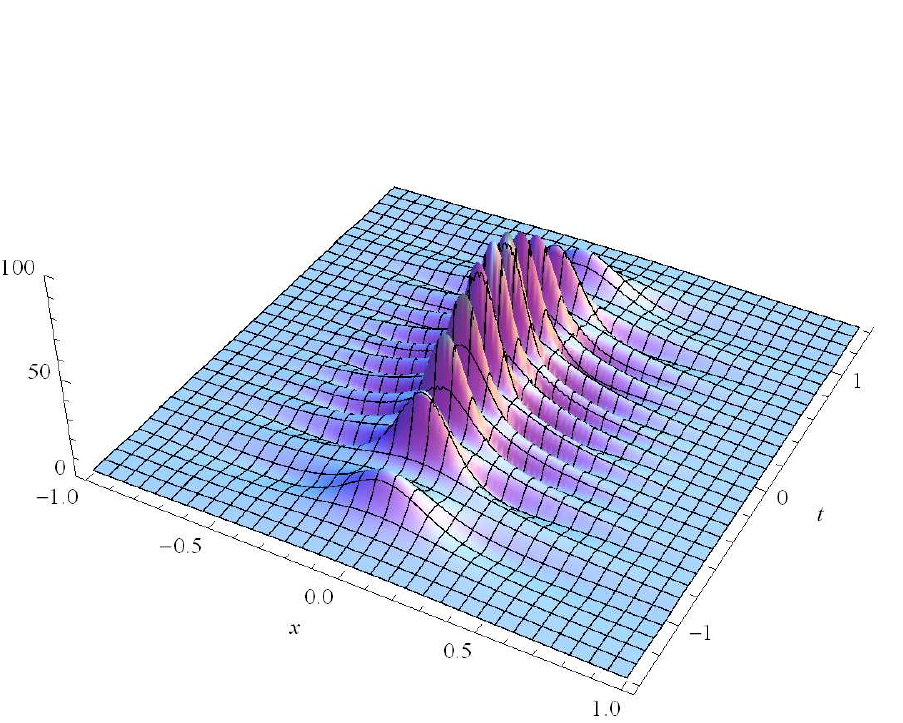}
\end{minipage}%
}%
\subfigure[ ]{
\begin{minipage}[t]{0.4\linewidth}
\centering
\includegraphics[width=2.1in]{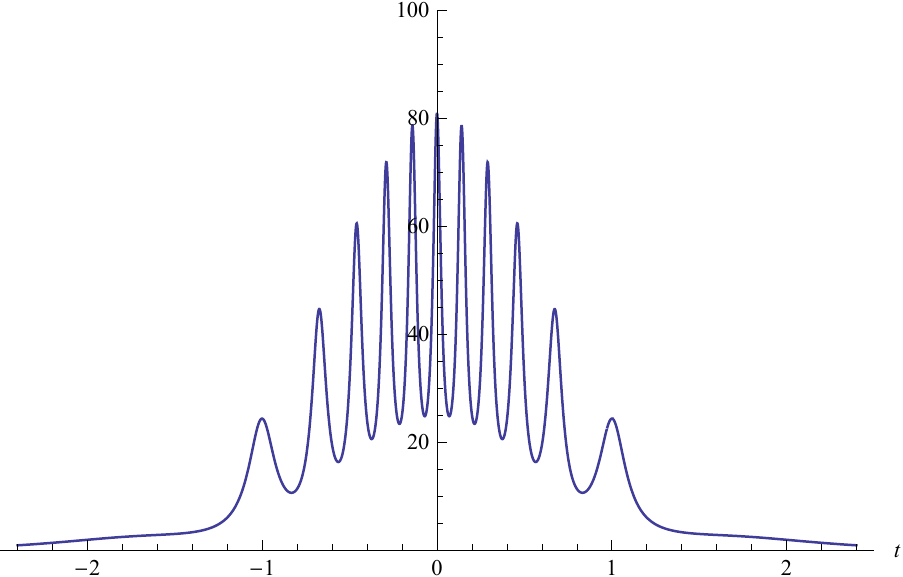}
\end{minipage}%
}%
\\
\subfigure[ ]{
\begin{minipage}[t]{0.4\linewidth}
\centering
\includegraphics[width=2.1in]{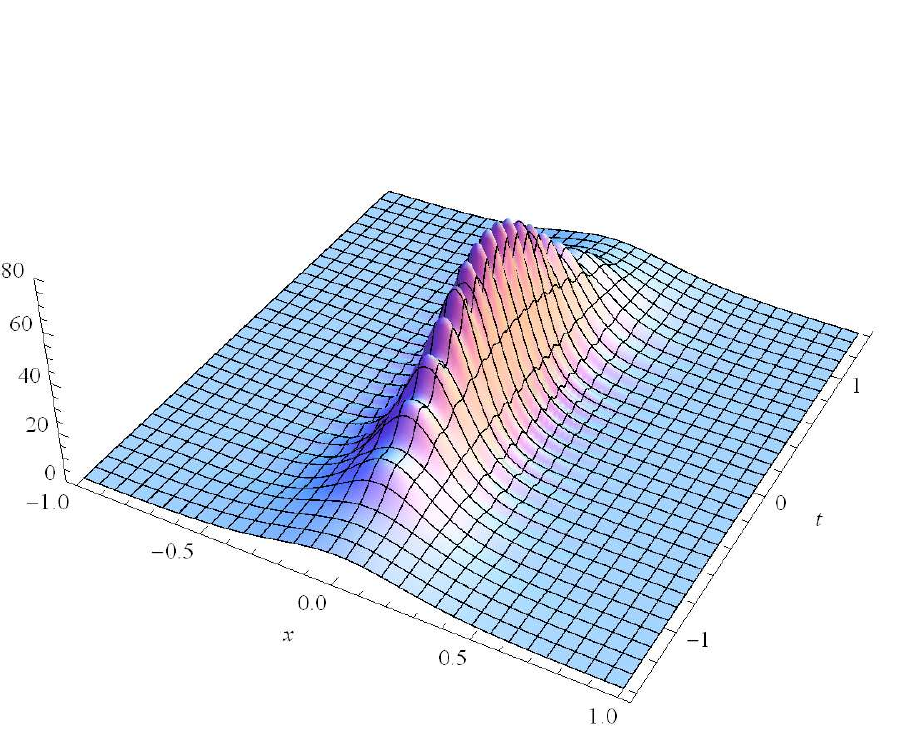}
\end{minipage}%
}%
\subfigure[ ]{
\begin{minipage}[t]{0.4\linewidth}
\centering
\includegraphics[width=2.1in]{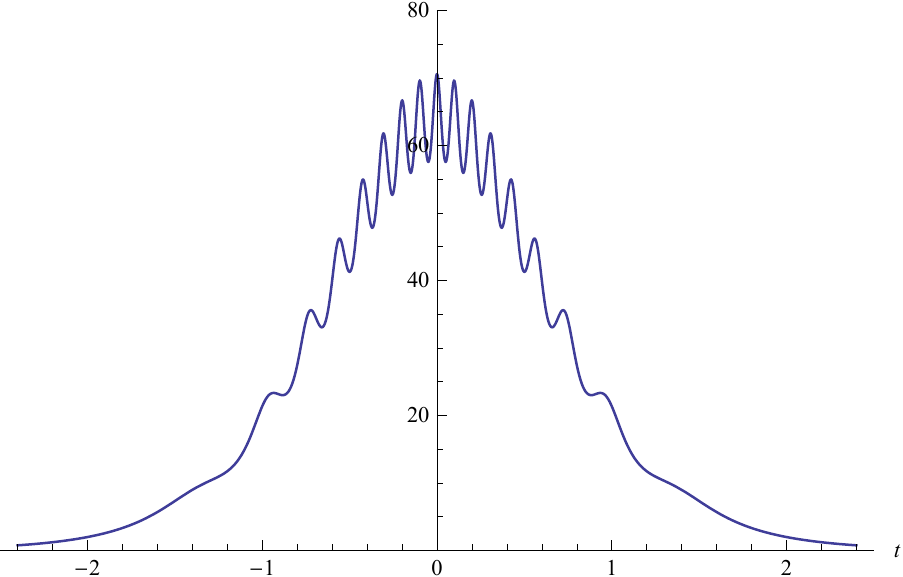}
\end{minipage}%
}%
\caption{Periodic-like interactions of the classical GP equation \eqref{GP2}.
(a) Soliton given by equation \eqref{2-sol-GP2-period} for $k_1=3.5$, $k_2=1$, $\theta^{(0)}_1=\theta^{(0)}_2=0$ and $\nu=0.6$.
(b) The 2D plot of (a) at $x=0$.
(c) Soliton given by equation \eqref{2-sol-GP2-period} for $k_1=4$, $k_2=0.2$, $\theta^{(0)}_1=\theta^{(0)}_2=0$ and $\nu=0.6$.
(d) The 2D plot of (c) at $x=0$.
}
\label{fig-3}
\end{figure}


\subsubsection{Jordan block solution and 3SS}\label{sec-4-1-3}

The simplest Jordan block solution of the classical GP equation \eqref{GP2}
is given by \eqref{phi4} with
\begin{equation}\label{2-sol-Jordan}
\phi=\bigl(e^{-\theta_1},\partial_{k_1}e^{-\theta_1},
e^{\theta_1^*},\partial_{k_1^*}e^{\theta_1^*}\bigr)^T,~~
\psi=\bigl(-e^{\theta_1},-(\partial_{k_1^*}e^{\theta_1^*})^*,
e^{-\theta_1^*},(\partial_{k_1}e^{-\theta_1})^*\bigr)^T,
\end{equation}
and $\theta_1$ is defined as \eqref{wron-entr-b},
the solution is described in Fig.~\ref{fig-4}.


\captionsetup[figure]{labelfont={bf},name={Fig.},labelsep=period}
\begin{figure}[h]
\centering
\subfigure[ ]{
\begin{minipage}[t]{0.45\linewidth}
\centering
\includegraphics[width=2.1in]{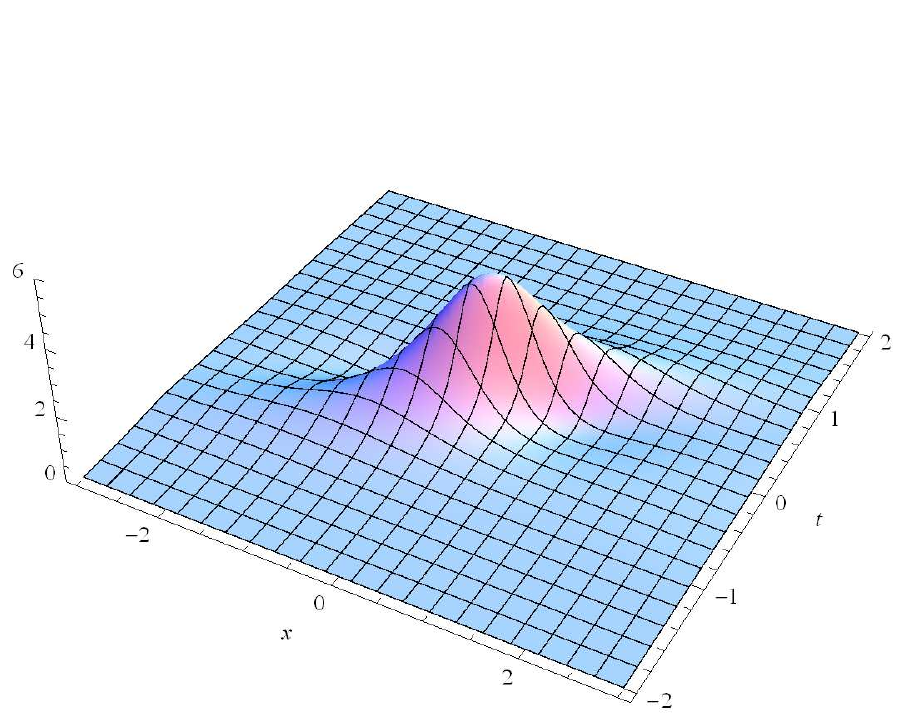}
\end{minipage}%
}%
\subfigure[ ]{
\begin{minipage}[t]{0.45\linewidth}
\centering
\includegraphics[width=2.1in]{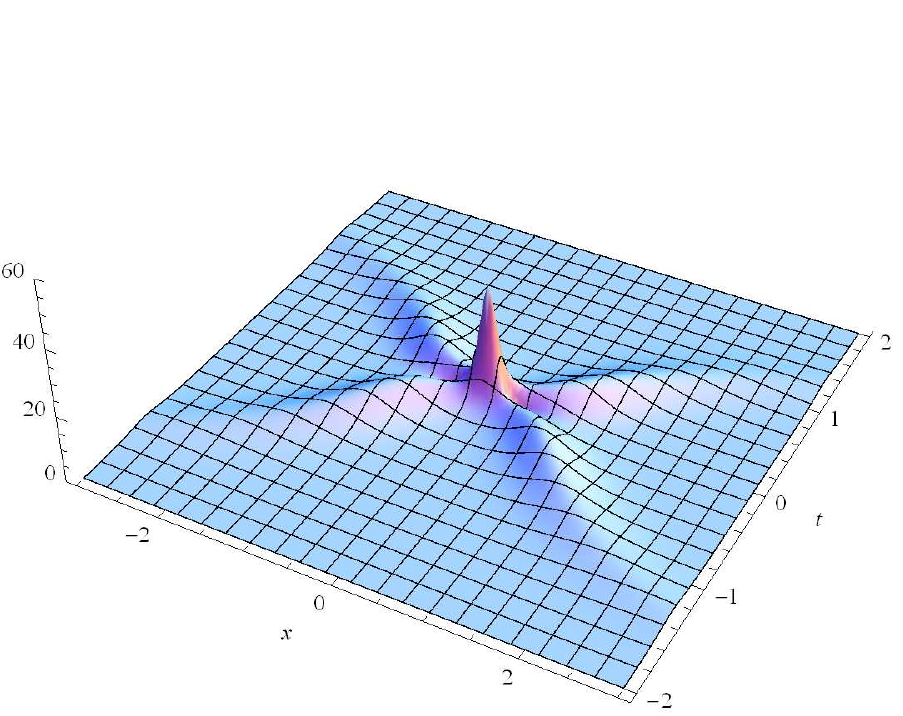}
\end{minipage}%
}%
\caption{Shape and motion for Jordan block solution to the classical GP equation \eqref{GP2}.
(a) Jordan block solution given by \eqref{2-sol-Jordan}
with $k_1=0.5-0.3i$, $\theta^{(0)}_1=0$ and $\nu=0.6$. (b) Jordan block solution given by \eqref{2-sol-Jordan}
with $k_1=1.5$, $\theta^{(0)}_1=0$ and $\nu=0.6$.}
\label{fig-4}
\end{figure}


We can also consider three solitons for classical GP equation \eqref{GP2} given by
\begin{equation}\label{3-sol-classical-GP-a}
|q_3|_{(\beta=-1,\sigma=1)}^2=\frac{g g^*}{f^2},~~ f=\left|\phi,~\partial_x \phi,~\partial^2_x \phi; ~\psi,~ \partial_x \psi,~\partial^2_x \psi\right|,~~
g=2\left|\phi,~\partial_x \phi; ~ \psi, ~\partial_x \psi, ~\partial^2_x \psi,~\partial^3_x \psi\right|
\end{equation}
with
\begin{equation}\label{3-sol-classical-GP-b}
\phi=\bigl(e^{-\theta_1},e^{-\theta_2},e^{-\theta_3};e^{\theta_1^*},e^{\theta_2^*},e^{\theta_3^*}\bigr)^T,~~~
\psi=\bigl(-e^{\theta_1},-e^{\theta_2},-e^{\theta_3};e^{-\theta_1^*},e^{-\theta_2^*},e^{-\theta_3^*}\bigr)^T.
\end{equation}
Fig.~\ref{fig-5}(a) and (b) show the interactions between one moving soliton and two solitons with same velocity,
and one can see a clear phase shift resulted from collision.
Fig.~\ref{fig-5}(c) and (d) depict a stationary soliton
comes into collision with the periodical ones and without phase shift for the later.


\captionsetup[figure]{labelfont={bf},name={Fig.},labelsep=period}
\begin{figure}[h]
\centering
\subfigure[ ]{
\begin{minipage}[t]{0.4\linewidth}
\centering
\includegraphics[width=2.1in]{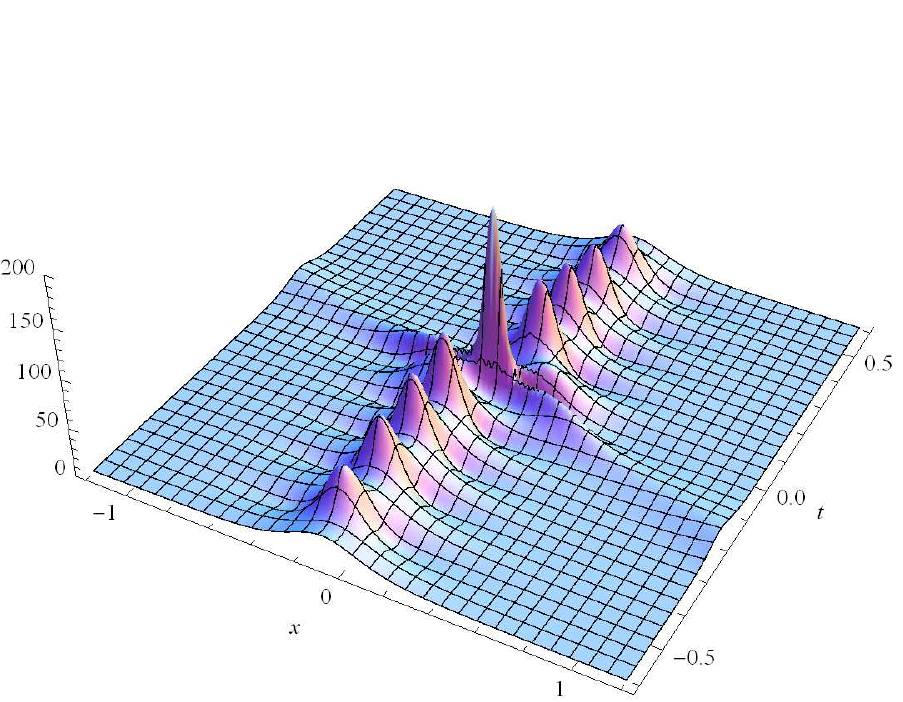}
\end{minipage}%
}%
\subfigure[ ]{
\begin{minipage}[t]{0.4\linewidth}
\centering
\includegraphics[width=2.1in]{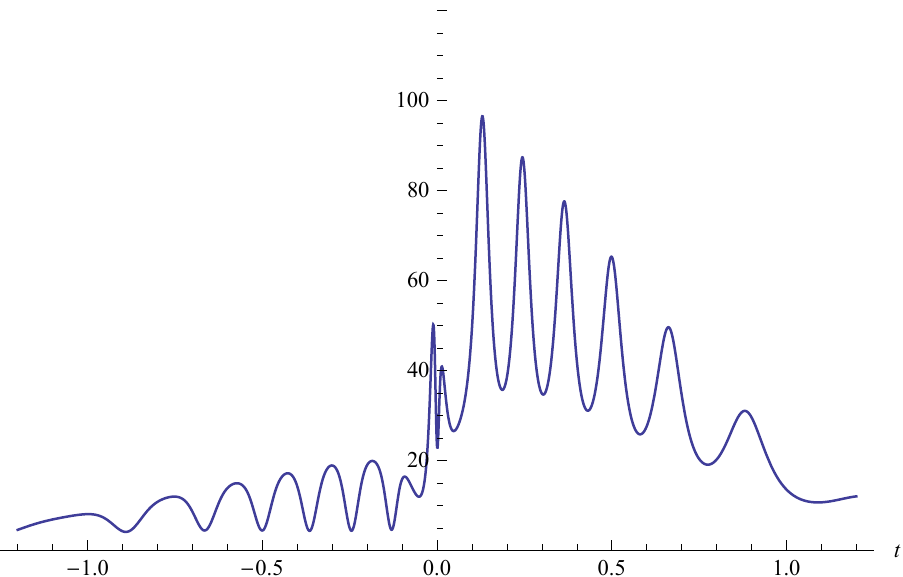}
\end{minipage}%
}%
\\
\subfigure[ ]{
\begin{minipage}[t]{0.4\linewidth}
\centering
\includegraphics[width=2.1in]{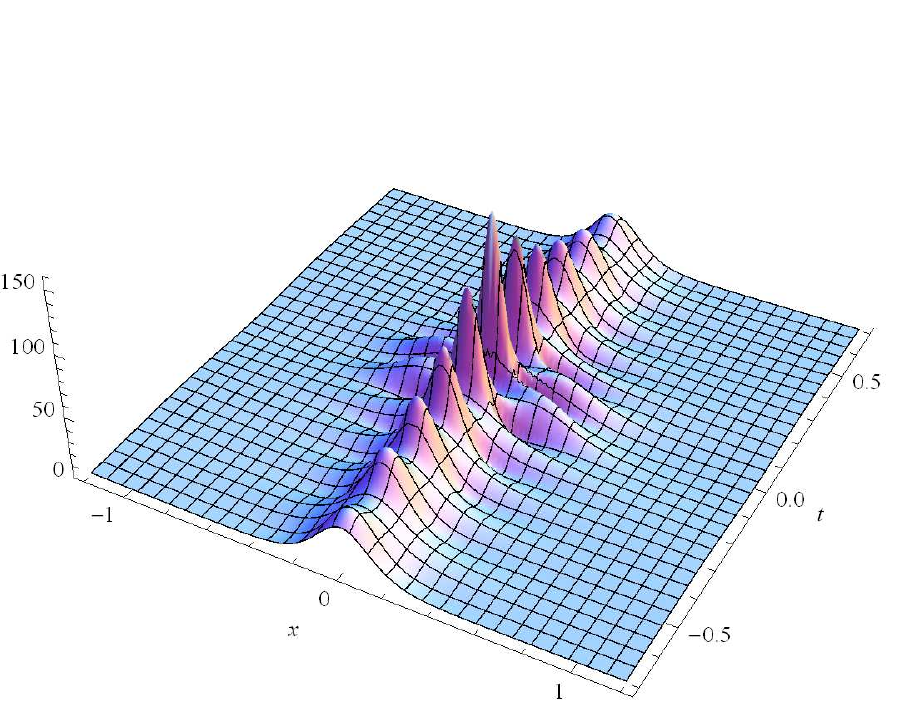}
\end{minipage}%
}%
\subfigure[ ]{
\begin{minipage}[t]{0.4\linewidth}
\centering
\includegraphics[width=2.1in]{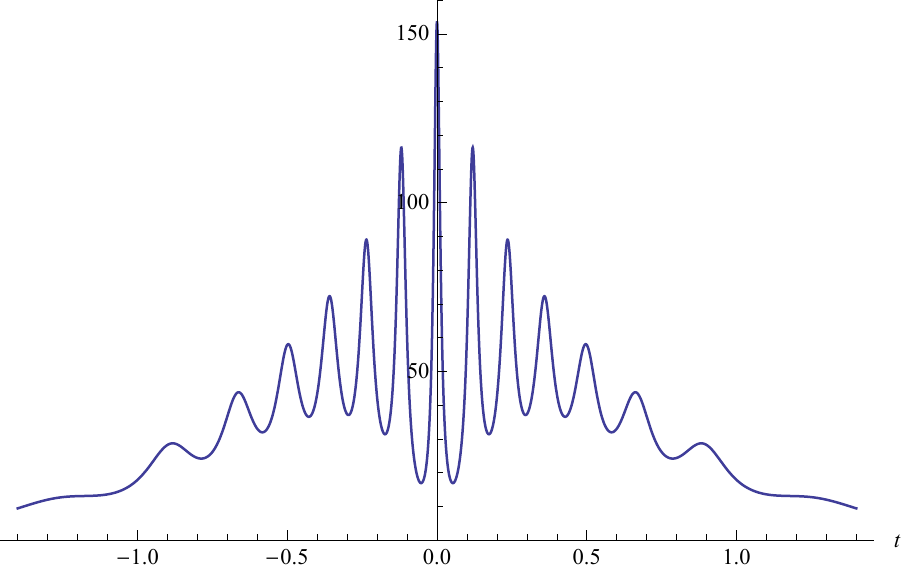}
\end{minipage}%
}%
\caption{Shape and motion of 3SS to the classical GP equation \eqref{GP2}.
(a) Interactions of three solitons given by equation \eqref{3-sol-classical-GP-a}
and \eqref{3-sol-classical-GP-b} for $k_1=4$, $k_2=1$, $k_3=2+i$, $\theta^{(0)}_1=\theta^{(0)}_2=\theta^{(0)}_3=0$ and $\nu=0.6$.
(b) The plot of $|q_3|^2$ on $x=0.1$ for (a).
(c) Interactions of three solitons given by equation \eqref{3-sol-classical-GP-a}
and \eqref{3-sol-classical-GP-b} for $k_1=4$, $k_2=1$, $k_3=1.2$, $\theta^{(0)}_1=\theta^{(0)}_2=\theta^{(0)}_3=0$ and $\nu=0.6$.
(d) The plot of $|q_3|^2$ on $x=0$ for (c).
}
\label{fig-5}
\end{figure}



\subsection{Nonlocal case}\label{sec-4-2}

\subsubsection{1SS}\label{sec-4-2-1}

Next, let us look at soliton solutions for the case $(\beta,\sigma)=(1,-1)$ of the nonlocal GP equation \eqref{GP00}.
One-soliton solution is given by \eqref{1-sol-nonlocal-GP-1} with $\beta=1$, which we depict in
Fig.~\ref{fig-6}. Noting that when $b_1=0$, we get trivial solution 0.


\captionsetup[figure]{labelfont={bf},name={Fig.},labelsep=period}
\begin{figure}
\centering
\includegraphics[height=4cm,width=6cm]{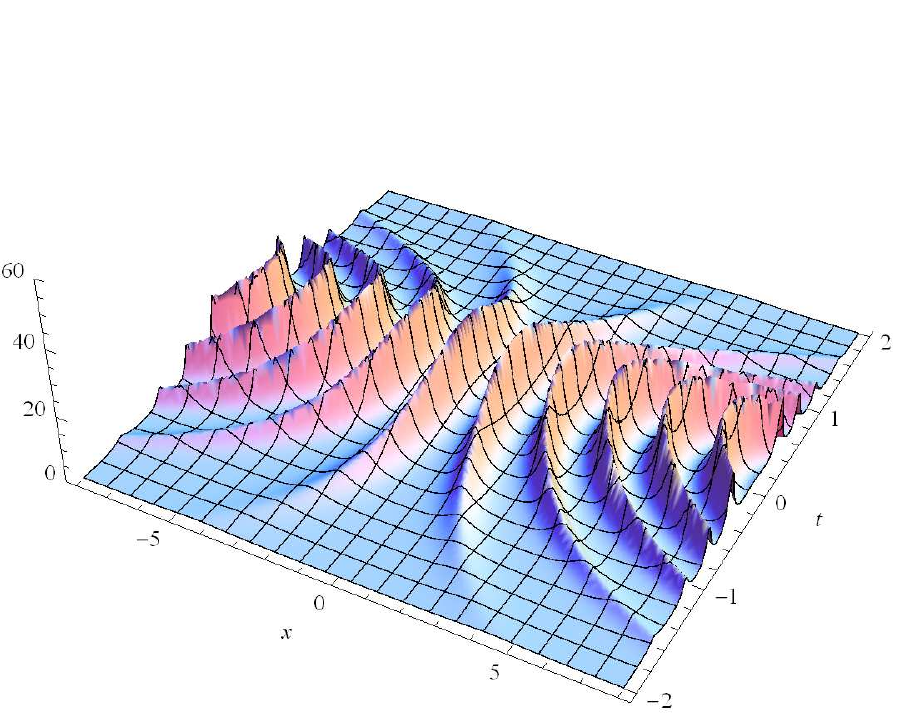}
\caption{ Shape and motion of 1SS \eqref{1-sol-nonlocal-GP-1} for equation \eqref{nonlocal GP} with $\beta=1$, in which
$k_1=i$, $\theta^{(0)}_1=0.2$ and $\nu=0.6$.}
\label{fig-6}
\end{figure}


\subsubsection{2SS}\label{sec-4-2-2}

Two-soliton solutions are obtained when we take
\begin{equation}\label{non-2SS-b}
\phi=\bigl(e^{-\theta_1},e^{-\theta_2};e^{\theta_1^*(-x)},e^{\theta_2^*(-x)}\bigr)^T,~~~
\psi=\bigl(-e^{\theta_1},-e^{\theta_2};e^{-\theta_1^*(-x)},e^{-\theta_2^*(-x)}\bigr)^T,
\end{equation}
we show by Fig.~\ref{fig-7}(a) and Fig.~\ref{fig-7}(b),
from which we can see the quasi-periodic interaction in 1SS case still exists in
Fig.~\ref{fig-7}(a), but it vanishes when $x$ is large enough, Fig.~\ref{fig-7}(b)
shows that when we choose a slightly larger $a_1$, the quasi-periodic interaction does not occur,
which is similar to the classical case.


\captionsetup[figure]{labelfont={bf},name={Fig.},labelsep=period}
\begin{figure}[ht]
\centering
\subfigure[ ]{
\begin{minipage}[t]{0.35\linewidth}
\centering
\includegraphics[width=2.1in]{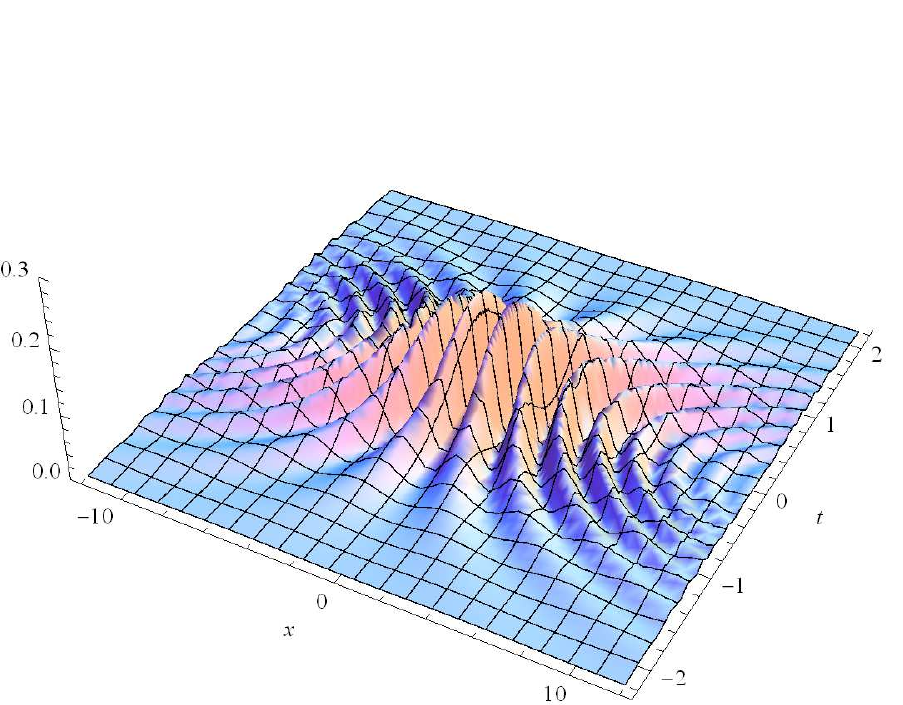}
\end{minipage}%
}%
\subfigure[ ]{
\begin{minipage}[t]{0.35\linewidth}
\centering
\includegraphics[width=2.1in]{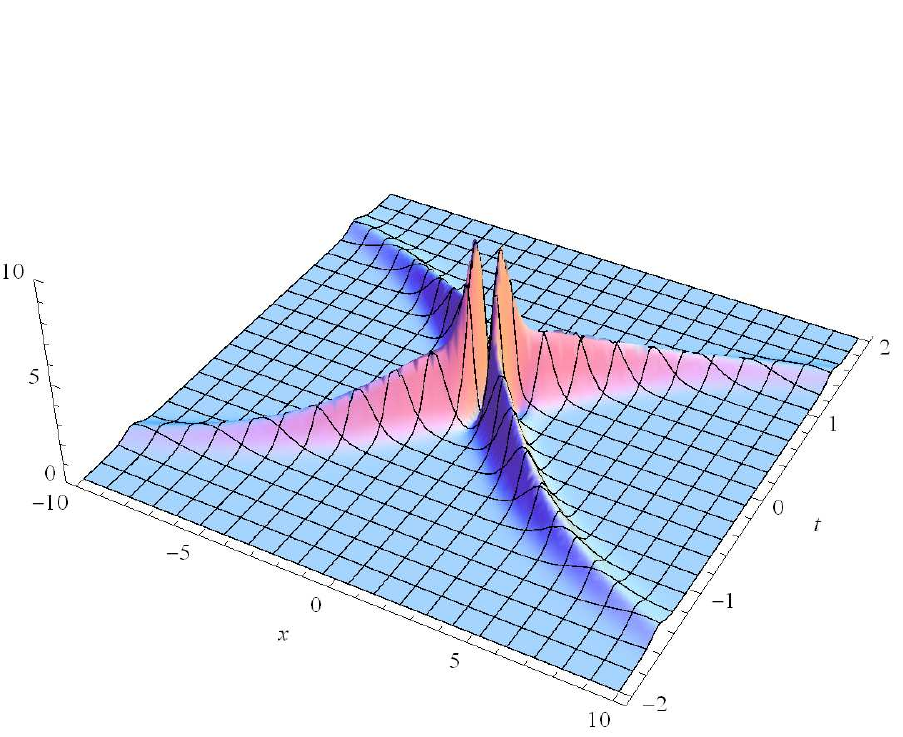}
\end{minipage}
}%
\caption{Shape and motion of 2SS to the nonlocal GP equation \eqref{nonlocal GP} with $\beta=1$.
(a) Solution given by \eqref{non-2SS-b} for $k_1=0.1+i, k_2=-0.1-i$, $\theta^{(0)}_1=\theta^{(0)}_2=0$ and $\nu=0.6$.
(b) Solution given by \eqref{non-2SS-b} for $k_1=1+i, k_2=-1-i$, $\theta^{(0)}_1=\theta^{(0)}_2=0$ and $\nu=0.6$.}
\label{fig-7}
\end{figure}


\subsubsection{Jordan block solution }\label{sec-4-2-3}

For Jordan block solutions of the nonlocal GP equation \eqref{GP00} with $\beta=1$,
we consider $N=2$, in this case,  the double Wronskian entry vectors $\phi$ and $\psi$ can be taken as
\begin{equation}
\phi=\bigl(e^{-\theta_1},\partial_{k_1}e^{-\theta_1},
e^{\theta_1^*(-x)},\partial_{k_1^*}e^{\theta_1^*(-x)}\bigr)^T,
\psi=\bigl(-e^{\theta_1},-(\partial_{k_1^*}e^{\theta_1^*})^*,
e^{-\theta_1^*(-x)},(\partial_{k_1}e^{-\theta_1(-x)})^*\bigr)^T,
\end{equation}
and $\theta_1$ is defined as \eqref{wron-entr-b}.

\section{Conclusion}\label{sec-5}

We have derived the classical and nonlocal GP equation \eqref{GP00} from the second order nonisospectral AKNS system
\eqref{nonAKNS} that allows both classical and nonlocal reductions.
Double Wronskian solutions of the general GP equation \eqref{GP00} are obtained by employing
a reduction technique that has been proved efficient in generating solutions for both classical and nonlocal systems
\cite{ChenDLZ-SPAM-2018,ChenZ-AML-2018,Deng-AMC-2018,Chen-AML-2019,ShiYing-ND-2019}.
We mainly analyzed dynamics of the obtained solutions of the classical GP equation \eqref{GP2}.
It is shown that all these solutions are space-time localized.
In particular, we found that when two solitons travel with same speed, the interaction leads to
and localized oscillating wave carried by a bell-shape soliton,
which is, to our knowledge, not reported before, and maybe catch attention in the experiments of BECs.

\subsection*{Acknowledgments}

This project is  supported by the NSF of China (Nos.11875040, 11631007, 11571225).

\begin{appendix}

\section{Proof of Theorem \ref{Theorem 1}}\label{app-A}

We start from two known lemmas.
\begin{lemma}\label{lemma 2}\cite{FreN-PLA-1983}
Suppose that $\mathbf{D}$ is an arbitrary $s\times (s-2)$ matrix, and $\mathbf{a}$, $\mathbf{b}$,
$\mathbf{c}$ and $\mathbf{d}$ are $s$-order column vectors, then
$$|\mathbf{D}, \mathbf{a},\mathbf{b}||\mathbf{D},\mathbf{c},\mathbf{d}|-|\mathbf{D}, \mathbf{a},\mathbf{c}||\mathbf{D},\mathbf{b},\mathbf{d}|
+|\mathbf{D}, \mathbf{a},\mathbf{d}||\mathbf{D},\mathbf{b}, \mathbf{c}|=0.$$
\end{lemma}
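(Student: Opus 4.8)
The plan is to recognize this identity as the three-term Grassmann--Plücker relation among the maximal minors of the $s\times(s+2)$ matrix $[\mathbf{D},\mathbf{a},\mathbf{b},\mathbf{c},\mathbf{d}]$, and to prove it by reducing modulo the column space of $\mathbf{D}$. First I would dispose of the degenerate case: if the $s-2$ columns of $\mathbf{D}$ are linearly dependent, then each of the six determinants in the statement contains that same dependent block and hence vanishes, so the identity reads $0=0$. Thus I may assume the columns of $\mathbf{D}$ are linearly independent and span an $(s-2)$-dimensional subspace $W\subseteq\mathbb{C}^{s}$.

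The key step is a factorization lemma. Fix two vectors $\mathbf{u},\mathbf{v}$ completing the columns of $\mathbf{D}$ to a basis of $\mathbb{C}^{s}$, and write each of $\mathbf{a},\mathbf{b},\mathbf{c},\mathbf{d}$ as a combination of the columns of $\mathbf{D}$ plus a remainder in $\mathrm{span}\{\mathbf{u},\mathbf{v}\}$; say $\mathbf{x}\equiv x_{1}\mathbf{u}+x_{2}\mathbf{v}\pmod{W}$. Since adding multiples of the columns of $\mathbf{D}$ to any other column leaves the determinant unchanged, column reduction yields
$$|\mathbf{D},\mathbf{x},\mathbf{y}|=(x_{1}y_{2}-x_{2}y_{1})\,|\mathbf{D},\mathbf{u},\mathbf{v}|=\lambda\,[\bar{\mathbf{x}},\bar{\mathbf{y}}],$$
where $\lambda=|\mathbf{D},\mathbf{u},\mathbf{v}|\neq0$ and $[\bar{\mathbf{x}},\bar{\mathbf{y}}]=x_{1}y_{2}-x_{2}y_{1}$ is the $2\times2$ determinant of the quotient coordinates. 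The point is that the \emph{same} scalar $\lambda$ serves all four vectors. Substituting this factorization into the six minors, the left-hand side of the claimed identity becomes $\lambda^{2}$ times the planar expression
$$[\bar{\mathbf{a}},\bar{\mathbf{b}}][\bar{\mathbf{c}},\bar{\mathbf{d}}]-[\bar{\mathbf{a}},\bar{\mathbf{c}}][\bar{\mathbf{b}},\bar{\mathbf{d}}]+[\bar{\mathbf{a}},\bar{\mathbf{d}}][\bar{\mathbf{b}},\bar{\mathbf{c}}].$$

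It then remains only to verify this two-dimensional relation, which is an elementary polynomial identity in the eight coordinates $a_{1},a_{2},\dots,d_{1},d_{2}$: expanding each $2\times2$ determinant and collecting monomials, every term cancels in pairs. Since $\lambda\neq0$, dividing by $\lambda^{2}$ recovers the original statement. The main obstacle --- really the only nontrivial point --- is setting up the factorization $|\mathbf{D},\mathbf{x},\mathbf{y}|=\lambda[\bar{\mathbf{x}},\bar{\mathbf{y}}]$ cleanly and confirming it holds simultaneously for all four vectors with one common $\lambda$; once that is secured the problem collapses to the planar Plücker relation above. If a purely determinantal derivation were preferred instead, the same identity can be obtained by Laplace-expanding a suitable enlarged singular determinant in the manner standard for Plücker relations, but the quotient reduction is shorter and makes the underlying Grassmannian structure transparent.
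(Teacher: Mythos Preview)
Your argument is correct. The reduction to the quotient $\mathbb{C}^{s}/W$ via a fixed complementary basis $\{\mathbf{u},\mathbf{v}\}$ and the resulting factorization $|\mathbf{D},\mathbf{x},\mathbf{y}|=\lambda\,[\bar{\mathbf{x}},\bar{\mathbf{y}}]$ are handled cleanly, and the residual $2\times 2$ Pl\"ucker identity is indeed a one-line polynomial check. The degenerate case is also disposed of correctly.

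As for comparison: the paper does not supply its own proof of this lemma. It is quoted with a citation to Freeman and Nimmo (1983) and then used as a black box in the Wronskian verification of Theorem~\ref{Theorem 1}. So there is no competing argument in the paper to set against yours; your quotient-space derivation is a perfectly standard and efficient way to establish the three-term Pl\"ucker relation, and it would serve the paper's purposes without change.
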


\begin{lemma}\label{lemma 1}\cite{ZDJ-arxiv,ZhaZSZ-RMP-2014}
Suppose that  $\Xi=(a_{js})_{M\times M}$ is an $M\times M$ matrix with
column vector set $\{\alpha_j\}$ and  row vector set $\{\beta_j\}$.
${\cal{P}}=(P_{js})_{M\times M}$ is an $M\times M$ operator matrix where
each $P_{js}$ is an operator. Then we have
\begin{equation}\label{id-D}
\sum^M_{s=1} |\alpha_1, \cdots,\alpha_{s-1},~C_s\alpha_s, ~\alpha_{s+1},\cdots,\alpha_M|
 =\sum^M_{j=1}\left|~\begin{matrix}
 \beta_1\\
 \vdots\\
 \beta_{j-1}\\
 R_j \beta_j\\
 \beta_{j+1}\\
 \vdots\\
 \beta_M
\end{matrix}~\right|,
\end{equation}
where
$C_s\alpha_s=\left(P_{1s} a_{1s}, ~P_{2s} a_{2s},\cdots, P_{Ms} a_{Ms}\right)^T$
and
$R_j\beta_j=\left(P_{j1} a_{j1}, ~P_{j2} a_{j2},\cdots, P_{jM} a_{jM}\right)$.
\end{lemma}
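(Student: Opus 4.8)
The plan is to reduce both sides of \eqref{id-D} to one and the same double sum of cofactor-weighted operated entries, and then to observe that the two sides merely sum this array in the two opposite orders. Write $\Delta_{js}$ for the $(j,s)$-cofactor of $\Xi$, that is, $(-1)^{j+s}$ times the minor obtained by deleting row $j$ and column $s$. The observation that makes everything work is that on the left of \eqref{id-D} only the entries of a \emph{single} column are replaced by $P_{js}a_{js}$, while on the right only the entries of a single row are so replaced; in either case the cofactor $\Delta_{js}$ never involves any entry lying in row $j$ or column $s$, so these cofactors are precisely those of the unmodified matrix $\Xi$ and are untouched by the operators.

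First I would expand the $s$-th determinant on the left along its modified $s$-th column. By Laplace expansion this gives
\begin{equation*}
|\alpha_1,\cdots,\alpha_{s-1},C_s\alpha_s,\alpha_{s+1},\cdots,\alpha_M|=\sum_{j=1}^{M}(P_{js}a_{js})\,\Delta_{js},
\end{equation*}
so that the entire left-hand side equals $\sum_{s=1}^{M}\sum_{j=1}^{M}(P_{js}a_{js})\Delta_{js}$. Symmetrically, expanding the $j$-th determinant on the right along its modified $j$-th row yields $\sum_{s=1}^{M}(P_{js}a_{js})\Delta_{js}$, so the entire right-hand side equals $\sum_{j=1}^{M}\sum_{s=1}^{M}(P_{js}a_{js})\Delta_{js}$. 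These are the same finite double sum written in opposite orders, hence equal by commutativity of finite summation.

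The only point needing a word of care is that the $P_{js}$ are operators rather than scalars, but I expect this to cause no genuine obstacle: each $P_{js}$ acts on its own designated entry $a_{js}$ \emph{before} any determinant is formed, so once evaluated $P_{js}a_{js}$ is an ordinary function and the multilinear Laplace-expansion machinery applies verbatim with $P_{js}a_{js}$ playing the role of a matrix entry and with the unoperated cofactors $\Delta_{js}$ as coefficients. Thus the lemma is, in essence, the bookkeeping identity that weighting each operated entry by its cofactor and summing column-first agrees with summing row-first. (Equivalently, one could expand both sides by the Leibniz permutation sum and match the terms indexed by a pair $(\pi,r)$ via $r=\pi^{-1}(s)$, but the cofactor argument is the cleaner route.)
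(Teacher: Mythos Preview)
Your argument is correct. The paper does not actually prove this lemma at all: it is quoted as a known identity from \cite{ZDJ-arxiv,ZhaZSZ-RMP-2014} and then used as a tool in the Wronskian verification of Theorem~\ref{Theorem 1}. So there is no ``paper's own proof'' to compare against.

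That said, your cofactor approach is exactly the standard way to justify the identity, and it is clean. The key observation you isolate --- that in each summand on the left only column $s$ is modified while in each summand on the right only row $j$ is modified, so the $(j,s)$-cofactor $\Delta_{js}$ is in both cases the cofactor of the \emph{unmodified} matrix $\Xi$ --- is precisely what makes the Laplace expansions collapse to the common double sum $\sum_{j,s}(P_{js}a_{js})\Delta_{js}$. Your remark that the operator nature of $P_{js}$ is harmless because each $P_{js}$ acts on its single entry $a_{js}$ before the determinant is formed is also the right way to handle that point; once $P_{js}a_{js}$ is evaluated it is just a function, and multilinearity applies as usual.
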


\noindent
\textit{Proof for Theorem \ref{Theorem 1}}:
Making use of structure of double Wronskians and dispersion relation  \eqref{wron-cond-x}, we have
\begin{eqnarray*}
f_x&=& |\W{N-2},N; \W{N-1}|+ |\W{N-1}; \W{N-2},N |,\\
f_{xx}&=& |\W{N-3},N-1,N; \W{N-1}|+|\W{N-2},N+1;\W{N-1}|\\
&&+2|\W{N-2},N; \W{N-2},N|+|\W{N-1}; \W{N-3},N-1,N|+|\W{N-1}; \W{N-2},N+1|,\\
\frac{1}{2}g_x&=&|\W{N-3},N-1; \W{N}|+|\W{N-2}; \W{N-1},N+1|,\\
\frac{1}{2} g_{xx}&=&|\W{N-4},N-2,N-1; \W{N}|+|\W{N-3},N; \W{N}|+2|\W{N-3},N-1;\W{N-1},N+1|\\
&&+|\W{N-2}; \W{N-2},N,N+1|+|\W{N-2}; \W{N-1},N+2|,\\
f_t&=& 2i(|\W{N-3},N-1,N; \W{N-1}|-|\W{N-2},N+1; \W{N-1}| -|\W{N-1}; \W{N-3},N-1,N|\\
&&+|\W{N-1}; \W{N-2},N+1|),\\
\frac{1}{2} g_t &=&2i(|\W{N-4},N-2,N-1; \W{N}|-|\W{N-3},N; \W{N}|-|\W{N-2}; \W{N-2},N,N+1|\\
&&+|\W{N-2}; \W{N-1},N+2|)+\frac{i\delta x^{2}}{2}g.
\end{eqnarray*}

Next, in order to simplify Wronskian verification,
we derive some relations of double Wronskians   using Lemma \ref{lemma 1}.
Taking $\Xi=\DW{N-1}{N-1}$ and for $1\leq j \leq 2N$,
\[ P_{js}=\left\{
\begin{array}{ll}
\frac{i\alpha(t)x}{2}+\partial_x+\frac{i\alpha(t)}{2}s\partial_x^{-1},~ & 1\leq s \leq N,\\
\frac{i\alpha(t)x}{2}-\partial_x+\frac{i\alpha(t)}{2}s\partial_x^{-1},~ & N+1\leq s \leq 2N,
\end{array}
\right.
\]
one can find from \eqref{id-D} that
\begin{equation*}
(\mathrm{Tr}(A))f=|\W{N-2},N; \W{N-1}|-|\W{N-1}; \W{N-2},N|,
\end{equation*}
where $\mathrm{Tr}(A)$ is the trace of $A$.
In a similar way, one has
\begin{equation*}
\begin{array}{rl}
~ & (\mathrm{Tr}(A))(|\W{N-2},N; \W{N-1}|-|\W{N-1}; \W{N-2},N|)\\
= &|\W{N-3},N-1,N; \W{N-1}|+|\W{N-2},N+1; \W{N-1}| -2|\W{N-2},N; \W{N-2},N|\\
~ & +|\W{N-1}; \W{N-3},N-1, N|+|\W{N-1}; \W{N-2},N+1|.
\end{array}
\end{equation*}
It then follows from the  trivial equality
$
f \left\{Tr(A)\left[Tr(A) f \right]\right \}
=\left[Tr(A) f \right]^2,
$
that
\begin{equation}
\begin{array}{rl}
~ & f \left(|\W{N-3},N-1,N; \W{N-1}|+|\W{N-2},N+1;
\W{N-1}|-2|\W{N-2},N; \W{N-2},N|\right.\\
~ & ~~~~\left. +|\W{N-1}; \W{N-3},N-1,N|+|\W{N-1}; \W{N-2},N+1|\right)\\
= & \left( |\W{N-2},N; \W{N-1}|-|\W{N-1}; \W{N-2},N|\right)^2.
\end{array}\label{25}
\end{equation}
In a same manner, we can have
\begin{equation}\label{26}
\begin{array}{rl}
~ & 2f \left(-|\W{N-4},N-2,N-1; \W{N}|+|\W{N-3},N; \W{N}|
      -2|\W{N-3},N-1;\W{N-1},N+1|\right.\\
~ & ~~~~\left. +|\W{N-2}; \W{N-2},N,N+1|+|\W{N-2}; \W{N-1},N+2|\right)-i\alpha(t)gf\\
=& 2\left( |\W{N-2},N; \W{N-1}|-|\W{N-1}; \W{N-2},N|)(|\W{N-3},N-1; \W{N}|-|\W{N-2}; \W{N-1},N+1|\right),
\end{array}
\end{equation}
and
\begin{equation}\label{27}
\begin{array}{rl}
~ & g \left(|\W{N-3},N-1,N; \W{N-1}|+|\W{N-2},N+1;
\W{N-1}|-2|\W{N-2},N; \W{N-2},N|\right.\\
~ & ~~~~\left. +|\W{N-1}; \W{N-3},N-1,N|+|\W{N-1}; \W{N-2},N+1|\right)\\
=& 2\left( |\W{N-2},N; \W{N-1}|-|\W{N-1}; \W{N-2},N|)(|\W{N-3},N-1; \W{N}|-|\W{N-2}; \W{N-1},N+1|\right),
\end{array}
\end{equation}
which are derived respectively from the relations
\begin{equation*}
f \left\{Tr(A)\left[Tr(A) g \right]\right \}
=[Tr(A) f][ Tr(A) g ],
\end{equation*}
\begin{equation*}
g \left\{Tr(A)\left[Tr(A) f \right]\right \}
=[Tr(A) f][ Tr(A) g ].
\end{equation*}

Now, substituting $f, g, f_x, g_x, f_t, g_t, f_{xx}, g_{xx}$ into equation \eqref{22a}, the
left-hand side gives rise to
\begin{equation*}
\begin{array}{rl}
~&(iD_t+D_x^2)g\cdot f\\
=&  4
\biggl[2f \left(|\W{N-2}; \W{N-2},N,N+1| +|\W{N-3},N; \W{N}|\right)\\
~& +g \left(|\W{N-3},N-1,N; \W{N-1}|+|\W{N-1},N; \W{N-2},N+1|\right)\biggr]\\
~&-4\biggl[\left(|\W{N-3},N-1; \W{N}|+|\W{N-2};\W{N-1},N+1|\right)\left(|\W{N-2},N;\W{N-1}|+|\W{N-1}; \W{N-2},N|\right)\\
~&+\left(|\W{N-3},N-1; \W{N}|-|\W{N-2};\W{N-1},N+1|\right)\left(|\W{N-2},N;\W{N-1}|-|\W{N-1}; \W{N-2},N|\right)\biggr]\\
~&-i\alpha(t)gf-\delta x^{2}gf,
\end{array}
\end{equation*}
where we have made use of \eqref{26} and \eqref{27}.
Noting that
\begin{equation*}
2f |\W{N-3},N; \W{N}|+ g|\W{N-3},N-1,N;\W{N-1}|-2|\W{N-3},N-1; \W{N}||\W{N-2},N; \W{N-1}|=0
\end{equation*}
and
\begin{equation*}
2f |\W{N-2}; \W{N-2},N,N+1|+ g|\W{N-1};  \W{N-2},N+1|-2|\W{N-2}; \W{N-1},N+1||\W{N-1};\W{N-2},N|\!=\!0
\end{equation*}
due to Lemma \ref{lemma 2},
we can immediately find that \eqref{22a} is valid.
\eqref{22b} can be verified similarly.
For the equation \eqref{22c},
by substitution of $f, f_x, f_{xx}$ and by virtue of \eqref{25},
one has
\[
D^2_x f\cdot f = 8 f  |\W{N-2},N; \W{N-2},N|-8|\W{N-2},N;\W{N-1}||\W{N-1}; \W{N-2},N|,
\]
the right hand side of which, in light of Lemma \ref{lemma 2},
reduces to $8 |\W{N-2}; \W{N}||\W{N}; \W{N-2}|$, i.e. $-2 gh$.
Thus, \eqref{22c} is proved as well.

\end{appendix}

\end{document}